\newtheorem{theorem}{Theorem}[section]
\newtheorem{lemma}[theorem]{Lemma}
\theoremstyle{definition}
\newtheorem{definition}[theorem]{Definition}
\newtheorem{example}[theorem]{Example}
\newtheorem{prop}[theorem]{Proposition}
\newtheorem{cor}[theorem]{Corollary}
\theoremstyle{remark}
\newtheorem{remark}[theorem]{Remark}
\newtheorem{rmk}[theorem]{Remark}
\numberwithin{equation}{section}
\newcommand{\mf}{\mathfrak}
\newcommand{\C}{\mathbb{C}}
\newcommand{\Z}{\mathbb{Z}}
\newcommand{\lspan}{\mathrm{span}}
\newcommand{\ad}{{\rm ad}}
\newcommand{\Hom}{{\rm Hom}}
\newcommand{\Ind}{{\rm Ind}}
\newcommand{\cdeg}{{\rm cdeg}}
	\newcommand{\comments}[1]{
		\ \\
		{\color{red}
			\textbf{Comment:} #1
		}
		\ \\
	}
\newcommand{\comments}[1]{}
	\newcommand{\details}[1]{
		\ \\
		{\color{blue}
			\textbf{Details:} #1
		}
		\ \\
	}
\newcommand{\details}[1]{}
\begin{document}

\title{A remark on semi-infinite cohomology}

\author{Xiao He}

\address{D\'epartement de math\'ematiques et de statistique, \\ Universit\'e Laval,  Qu\'ebec (QC),  Canada}
\email{xiao.he.1@ulaval.ca}


\keywords{Semi-infinite cohomology, BRST reduction, admissible pairs,  affine W-algebras}
\subjclass[2010]{17B56, 17B70, 17B81}
\begin{abstract}
	We extend the notion of semi-infinite cohomology of Lie algebras to include cases where the Lie algebra does not admit a semi-infinite structure but satisfies a mild condition.   Our construction clarifies the definition of affine W-algebras in general nilpotent elements case given in \cite{Kac&Wakimoto2003}. We will also give a characterization of admissible pairs with respect to a nilpotent element in a semisimple Lie algebra and define affine W-algebras associated to admissible pairs, while finite W-algebras associated to admissible pairs were already introduced in \cite{Sadaka}. 
\end{abstract}

\maketitle

\section*{Introduction}
The notion of semi-infinite cohomology (or BRST cohomology) is the mathematical counterpart of BRST reduction in physics. It was introduced by B. Feigin \cite{Feigin 1984} in 1984 for Lie algebras. A super version was studied soon after by A. Kirillov \cite{Kirillov}.  See also \cite{FGZ 1986, A.Voronov1993} for further clarification and explicit calculations. Later on, the technique of semi-infinite cohomology was also studied for associative algebras \cite{Arkhi1} and quantum groups \cite{Arkhi2} by S. Arkhipov. In \cite{IgorAnton},  I. Frenkel and A. Zeitlin realized the quantum group $SL_q(2)$ as a semi-infinite cohomology of the Virasoro algebra with coefficients in a tensor product of two Fock representations.  

Unlike ordinary Lie algebra cohomology,  computing semi-infinite cohomology requires a semi-infinite structure on the Lie algebra. Roughly speaking, a semi-infinite structure is a Lie algebra module structure on the space of semi-infinite forms. The requirement of such a structure is to make sure that the BRST differential is nilpotent, i.e., of square zero, which is essential in cohomology theory. \medskip 

What about if the Lie algebra admits no semi-infinite structure? One way to adjust this is to consider some one-dimensional central extension, which is called  cancellation of anomalies in physics. Another way is, as the physicists did, to add more ``ghosts'', hence to modify the BRST complex,  and then to make a deformation of the BRST differential to make it nilpotent \cite{Bershadsky}.

The present paper will explain which ``ghosts'' should be added, how to modify and to characterize the modified BRST differential in a rigorous mathematical way. As a byproduct, we will give a uniform definition of affine W-algebras in general nilpotent elements case, which clarifies the construction in \cite{Kac&Wakimoto2003}. We will also introduce affine W-algebras associated to admissible pairs, while finite W-algebras associated to admissible pairs were already introduced in \cite{Sadaka}. \medskip

Let $L=\bigoplus_{n\in \Z}L_n$ be a $\Z$-graded Lie algebra with $\dim L_n<\infty$. 
Let $\{e_i~|~i\in \Z \}$ be a well-ordered basis of $L$, where we assume that each $e_i\in L_m$ for some $m$, and if $e_i\in L_m$, then $e_{i+1}\in L_m$ or $e_{i+1}\in L_{m+1}$. Let $L^*=\bigoplus_{n\in \Z}\Hom_{\C}(L_{-n}, \C)$ be the restricted dual of $L$ with dual basis $\{e_i^*~|~i\in \Z \}$. 
A \emph{semi-infinite form} on $L$ is a linear combination of infinite wedge products of $L^*$ of the following type,  \begin{align}\label{eq:infinitewedgeproduct}
\omega=e_{i_1}^*\wedge e_{i_2}^*\wedge\cdots, \end{align}
such that for some $N\in \Z_+$,  we have $i_{k+1}=i_k-1$ for all $k>N$. Let $\ad^*$ be the coadjoint action of $L$ on $L^*$.  A \emph{semi-infinite structure} on $L$ is an $L$-module structure on $\Lambda^{\infty/2+\bullet}L^*$, the space of semi-infinite forms,  under the action
\begin{align}\label{semifiniteactions}
x\cdot e_{i_1}^*\wedge e_{i_2}^*\wedge\cdots :=\sum_{k\geq 1}e_{i_1}^*\wedge \cdots \wedge \ad^*x(e_{i_k}^*) \wedge\cdots
\end{align} 
for $x\in L$ (we need a modification of (\ref{semifiniteactions}) when $x\in L_0$).  When $L$ admits a semi-infinite structure, given a smooth  $L$-module $M$, the author of \cite{A.Voronov1993} defined a square zero differential $d$  on  $M\otimes \Lambda^{\infty/2+\bullet}L^*$. The cohomology of $(M\otimes \Lambda^{\infty/2+\bullet}L^*, d)$ is called  \emph{semi-infinite cohomology} of $L$ with coefficients in $M$.\medskip

It was also shown in \cite{A.Voronov1993} that there is a well-defined 2-cocyle $\gamma$ on $L$, such that $L$ admits a semi-infinite structure if and only if $\gamma\equiv 0$.
In the present paper, we consider the case $\gamma\neq 0$ hence $\Lambda^{\infty/2+\bullet}L^*$ admits no $L$-module structure.  

Let $\ker \gamma:=\{x\in L~|~ \gamma(x, L)\equiv 0\}$ be the radical of $\gamma$, and $F$ a graded complement  of $\ker\gamma$ in $L$. 
Consider the 1-dimensional central extension of $F$ determined by $\gamma$, i.e., the Lie algebra $F\oplus \C K$ with $[x, y]:=-\gamma(x, y)K$ for $x, y\in F$ and $[K, F]\equiv 0$.  Let $\mf F$  be the Fock module of $F$ defined by (\ref{eq:fockrepres}). We show that when  $[L, L]\subseteq \ker \gamma$ is satisfied,  the tensor product  $\Lambda^{\infty/2+\bullet}L^*\otimes \mf F$ admits an $L$-module structure even though  $\Lambda^{\infty/2+\bullet}L^*$ does not.  Moreover, given a smooth $L$-module $M$, the operator $\bar{d}$ on  $ M\otimes\Lambda^{\infty/2+\bullet}L^*\otimes \mf F$ defined by (\ref{def:dbar}) is of square zero. That means, this Fock module $\mf F$ are exactly the ``ghosts'' that we should add in the BRST reduction! 
\medskip

This paper is organized as follows. In \cref{sec:1}, we give a brief review of semi-infinite structure and semi-infinite cohomology. We also show that the affinization of a nilpotent Lie algebra admits a semi-infinite structure.  In \cref{sec:2}, we present an adjustment of semi-infinite cohomology when the Lie algebra admits no semi-infinite structure, and give a characterization of the adjusted differential.  As an application,  we give a uniform definition of affine W-algebras associated to good $\Z$-gradings and also introduce affine W-algebras associated to admissible pairs in \cref{sec:3}. 

All vector spaces, algebras and tensor products are considered over the complex numbers $\C$ except explict declaration. 
\medskip

\noindent {\em Acknowledgements:} The author would like to thank the China Scholarship Council (File No.201304910374) and l'Institut des sciences math\'ematiques for their financial support during the preparation of this paper. The author is also grateful for funding received from the NSERC Discovery Grant of his research supervisor (Michael Lau).

\medskip

\section{Semi-infinite structure and semi-infinite cohomology}\label{sec:1}

A Lie (super)algebra $L$ is called {\em quasi-finite $\Z$-graded} if  $$L=\bigoplus_{n\in \Z}L_n \mbox{~~with~~} \dim L_n<\infty, \mbox{~~and~~} [L_n, L_m]\subseteq L_{m+n} ~\mbox{~for all ~}~ m, n\in \Z .$$  We have  $L=L_{\leq 0}\oplus L_+$, where $L_{\leq 0}:=\bigoplus_{n \leq 0} L_{n}\, \mbox{~and~} \, L_{+}:=\bigoplus_{n>0} L_{n}$  are both subalgebras. 
The $\Z$-grading on $L$ induces $\Z$-gradings on $U (L_{\leq 0}) $, $U (L_{+})$ and  $U(L)$, where $U(-)$ is the universal enveloping algebra functor.  By the PBW theorem,   we have $U (L)\cong
U (L_{\leq 0}) \otimes U (L_{+})$ as vector spaces. An homogeneous element of $U(L)$ is of the form  $\sum_{i=1}^{r} u_{i} v_{i}$ with $u_{i}
\in U (L_{\leq 0}),$ $v_{i} \in U (L_{+})$ and
$\deg (u_{i} v_{i})=\deg (u_{j} v_{j})  \mbox{~for all ~}~ i, j$.
When infinite sums $\sum_{i=-\infty}^{\infty} u_{i} v_{i}$, such that only a finite number of $v_i$ have degree less than a given $N\in \Z_{\geq 0}$ are allowed, we get the {\em completion} $U(L)^{com}$ of $U(L)$.
Products are well-defined in $U(L)^{com}$, which makes it into an associative algebra and $U(L)$ can be considered as a subalgebra of $U(L)^{com}$.

\begin{definition} Let $L$ be a quasi-finite $\Z$-graded Lie algebra. An $L$-module $M$ is called {\em smooth} if for any given  $m\in M$, we have $U(L)_n \cdot m=0$ for $n\gg 0$.
\end{definition}

\begin{rmk}
The completion  $U(L)^{com}$ acts on smooth $L$-modules. Let $M_1, M_2$ be smooth modules for $L_1, L_2$, respectively. Then $M_1\otimes M_2$ is a smooth $L_1\oplus L_2$-module. 
\end{rmk}

\begin{definition}
	Let $(L_1, \circ_1), (L_2, \circ_2)$ be two associative or Lie superalgebras, and $\varphi: L_1\rightarrow L_2$ a homomorphism. A {\em superderivation}  of parity $i\in \Z_2$ with respect to $\varphi$ is a parity-preserving linear map  $D: L_1
	\rightarrow L_2$ satisfying Leibniz's rule
	\begin{align}\label{derivationLie}
	D(u\circ_1 v)=D(u)\circ_2 \varphi(v)+(-1)^{i\cdot p(u)} \varphi(u)\circ_2D(v)
	\end{align}
	for all $u, v \in L_1$ with $u$ homogeneous, where $p(u)$ is the parity of $u$. We call $D$ even if $i=0$ and odd if $i=1$.  
\end{definition}

\begin{rmk}Let $S$ be a generating subset of $L_1$. Then a linear map $D$ satisfying (\ref{derivationLie}) for all $u, v\in S$ can be extended uniquely, through Leibniz's rule, to a superderivation, i.e.,  a  superderivation is completely determined by its value on $S$.  
	
\end{rmk}

\subsection{Semi-infinite structure}
Let $L=L_{\leq 0}\oplus L_{+}$ be a quasi-finite $\Z$-graded Lie algebra.  Let $\{e_i~|~i\leq 0\}$ and $\{e_i~|~i>0\}$ be bases of $L_{\leq 0}$ and $L_+$, respectively, such that each $e_i\in L_m$ for some $m\in \Z$. We also require that whenever $e_i\in L_m$, we have $e_{i+1}\in L_m$ or $e_{i+1}\in L_{m+1}$.  
Let $L^*=\bigoplus_{n\in \Z}L^*_n$ be the restricted dual of $L$ with dual basis $\{e_i^*~|~i\in \Z\}$  such that $\langle e_i^*, e_j\rangle=\delta_{i, j}$, where $L^*_n:=\Hom_{\C} (L_{-n}, \C)$.  

\begin{definition}\label{def:semiinfiniteforms}
	The  space $\Lambda^{\infty/2+\bullet}L^*$ of {\em semi-infinite forms}  on $L$ is the vector space  spanned by infinite wedge products of $L^*$ of the following type,  
	$$\omega=e_{i_1}^*\wedge e_{i_2}^*\wedge\cdots $$
	such that there exists an integer $N(\omega)$ and  $i_{k+1}=i_k-1$ for all $k>N(\omega)$.  
\end{definition}
\bigskip 
Let $\iota(L)$ and $\varepsilon(L^*)$ be copies of $L$ and $L^*$.
For $x\in L$ and $y^*\in L^*$, we denote by $\iota(x)$ and $\varepsilon(y^*)$ the corresponding elements in $\iota(L)$ and  $\varepsilon(L^*)$, respectively. Let
\begin{align*}
cl(L):=\iota(L)\oplus \varepsilon(L^*)\oplus \C K\end{align*} with $\iota(L)\oplus \varepsilon(L^*)$ being odd (note that we assume that $L$ is a Lie algebra, hence purely even), $K$ being even, and with Lie superbracket: for $x, y\in L$ and $u^*, v^*\in L^*$,
\begin{align*}
[\iota(x), \iota(y)]=[\varepsilon(u^*), \varepsilon(v^*)]=0, \quad [\iota(x),\varepsilon(u^*)]=\langle u^*, x\rangle K,\quad  [K, cl(L)]=0.
\end{align*}
The Lie superalgebra $cl(L)$ inherits a natural $\Z$-grading from $L$ with $$cl(L)_n=\begin{cases}
\iota(L_n)\oplus \epsilon(L^*_n) & \mbox{~if~} n\neq 0,\\
\iota(L_0)\oplus \epsilon(L^*_0)\oplus \C K &\mbox{~if~} n=0,
\end{cases}$$ 
and it acts on $\Lambda^{\infty/2+\bullet}L^*$ in the following way: $K$ acts as identity, and for $e_{i_0}\in L$, 
\begin{align*}
\varepsilon(e_{i_0}^*)\cdot e_{i_1}^*\wedge e_{i_2}^*\wedge\cdots &=e_{i_0}^*\wedge e_{i_1}^*\wedge e_{i_2}^*\wedge\cdots, \\
\iota(e_{i_0})\cdot e_{i_1}^*\wedge e_{i_2}^*\wedge\cdots &=\sum_{k\geq 1}(-1)^{k-1}\langle e_{i_k}^*,  e_{i_0}\rangle e_{i_1}^*\wedge\cdots \wedge \hat{e}_{i_k}^*\wedge \cdots.
\end{align*}
The \emph{Clifford algebra} $Cl(L\oplus L^*)$  is defined to be the quotient of $U(cl(L))$ by the ideal generated by $K-1$, and it also has a well-defined action on $\Lambda^{\infty/2+\bullet}L^*$.
\medskip 

For a subspace $V$ of $L$, we let $V^\perp:=\{w^*\in L^*~|~ \langle w^*, u\rangle=0 \mbox{~for all~} u\in V\}$. Then $L_+^\perp = \bigoplus_{n\geq 0}L_n^*$. Let  $\omega_0=e_0^*\wedge e_{-1}^*\wedge e_{-2}^*\wedge \cdots$. Then we have  \begin{align}\label{relation:Fockmodule}
\iota(v)\cdot \omega_0=\varepsilon(u^*)\cdot \omega_0=0, \mbox{~for all~} v\in L_+ \mbox{~and~} u^*\in L_+^\perp. 
\end{align}  
\begin{remark}The elements $\iota(v), \varepsilon(u^*)$ with $v\in L_+ \mbox{~and~} u^*\in L_+^\perp$ are called {\em annihilation operators}.  Note that annihilation operators always anticommute. 
\end{remark}
One can show that  $\Lambda^{\infty/2+\bullet}L^*$ is an irreducible module of $Cl(L\oplus L^*)$ generated by the ``vacuum'' vector $\omega_0$,  with relations defined by  (\ref{relation:Fockmodule}). Every element of $\Lambda^{\infty/2+\bullet}L^*$ can be written as a linear combination of monomials of the form $$\iota(e_{i_1})\cdots\iota(e_{i_s})\varepsilon(e^*_{j_1})\cdots \varepsilon(e_{j_t}^*)\cdot \omega_0.$$ 
Note that $cl(L)_n\cdot \omega_0=0$ for $n>0$ by  (\ref{relation:Fockmodule}), so $\Lambda^{\infty/2+\bullet}L^*$ is a smooth $cl(L)$-module, and the action can be extended to $U_1(cl(L))^{com}:=U(cl(L))^{com}/(K-1)$. 
\medskip 

We want to define an $L$-action on $\Lambda^{\infty/2+\bullet}L^*$ through that of $cl(L)$. 
For $x\in L_n$ with $n\neq 0$, we denote by $\rho(x)$, the action defined by the following
\begin{align}\label{nonzerocomponentaction}
\rho(x)\cdot e_{i_1}^*\wedge e_{i_2}^*\wedge\cdots :=\sum_{k\geq 1}e_{i_1}^*\wedge \cdots \wedge \ad^*x(e_{i_k}^*) \wedge\cdots, 
\end{align}
where $\ad^*$ is the coadjoint action of $L$ on $L^*$. The above sum is finite, thanks to the definition of semi-infinite forms and the fact that $x\in L_n$ for some $n\neq 0$. It is easy to verify the following relations (as operators on $\Lambda^{\infty/2+\bullet}L^*$): for all $y\in L, z^*\in L^*, $
\begin{align}\label{ActonSIF}
[\rho(x), \iota(y)]=\iota(\ad\, x(y)), \qquad [\rho(x), \varepsilon(z^*)]=\varepsilon(\ad^*x(z^*)).
\end{align}

For $x\in L_0$, we cannot use (\ref{nonzerocomponentaction}) as it may involve an infinite sum. Choose $\beta\in L^*_0$ and define $\rho(x)\cdot\omega_0:=\beta(x)\omega_0$ for $x\in L_0$, and then extend to  $\Lambda^{\infty/2+\bullet}L^*$ by requiring (\ref{ActonSIF}).  This can be done because $\Lambda^{\infty/2+\bullet}L^*$ is generated by $\omega_0$ as a  $cl(L)$-module.\medskip 

To give an explicit expression of $\rho(x)$, we define the \emph{normal ordering} :: as follows,
\begin{align*} :\iota(e_i)\iota(e_j)&:=\iota(e_i)\iota(e_j),\,\ :\varepsilon(e_i^*)\varepsilon(e_j^*):=\varepsilon(e_i^*)\varepsilon(e_j^*), \mbox{~for all ~}~ i, j\in \Z, \\ 
-:\varepsilon(e_j^*)\iota(e_i):&=:\iota(e_i)\varepsilon(e_j^*):=\begin{cases} \iota(e_i)\varepsilon(e_j^*)\, &\mbox{~if~} i\neq j \, \mbox{~or~}  i=j\leq 0 ,\\
-\varepsilon(e_j^*)\iota(e_i) \, &\mbox{~if~} i=j>0.
\end{cases} 
\end{align*}

Now for all $x\in L$, the following element is well-defined in $U_1(cl(L))^{com}$,
\begin{align}\label{rho}
\rho^{\beta}(x):=\sum_{i\in \Z}:\iota(\ad\, x(e_i))\varepsilon(e_i^*):+\beta(x).
\end{align} 
The element $\rho^\beta(x)$  acts on $\Lambda^{\infty/2+\bullet}L^*$ since $\Lambda^{\infty/2+\bullet}L^*$ is a smooth $cl(L)$-module.  Moreover, one can show that $\rho^\beta(x)$ satisfies the same relations (\ref{ActonSIF}) as $\rho(x)$ does.  

\begin{lemma}
	The operator $\rho^\beta(x)$ realizes the action of $\rho(x)$ on $\Lambda^{\infty/2+\bullet}L^*$. 
\end{lemma}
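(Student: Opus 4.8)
The plan is to deduce the operator identity $\rho^\beta(x)=\rho(x)$ on $\Lambda^{\infty/2+\bullet}L^*$ from two facts: both operators obey the Clifford commutation relations \eqref{ActonSIF}, and $\Lambda^{\infty/2+\bullet}L^*$ is generated by the vacuum $\omega_0$. For $\rho(x)$ the relations \eqref{ActonSIF} hold by construction (for $x\in L_0$) and by the verification recorded just after \eqref{nonzerocomponentaction} (for $x\in L_n$, $n\neq 0$); for $\rho^\beta(x)$ they are the relations announced immediately after \eqref{rho}. Setting $C:=\rho^\beta(x)-\rho(x)$, these relations say that the even operator $C$ commutes with every $\iota(y)$ and every $\varepsilon(z^*)$, hence with all of $Cl(L\oplus L^*)$. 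Writing an arbitrary vector as $v=a\cdot\omega_0$ with $a$ a product of creation operators, we get $Cv=Ca\cdot\omega_0=a\cdot C\omega_0$, so it suffices to prove $C\omega_0=0$. I will verify this, splitting $x\in L_n$ into the cases $n\neq 0$ and $n=0$.

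For $x\in L_n$ with $n\neq 0$ I would in fact establish the full operator identity directly. Using that $:\iota(e_j)\varepsilon(e_i^*):$ equals $\iota(e_j)\varepsilon(e_i^*)$ except when $i=j>0$, where it equals $\iota(e_j)\varepsilon(e_i^*)-1$, together with $\iota(e_j)\varepsilon(e_i^*)=\delta_{ij}-\varepsilon(e_i^*)\iota(e_j)$ and $\ad\,x(e_i)=\sum_j\langle e_j^*,[x,e_i]\rangle e_j$, one obtains
\begin{align*}
\sum_{i\in\Z}:\iota(\ad\,x(e_i))\varepsilon(e_i^*):=\sum_{i\le 0}\langle e_i^*,[x,e_i]\rangle-\sum_{i,j}\langle e_j^*,[x,e_i]\rangle\,\varepsilon(e_i^*)\iota(e_j).
\end{align*}
The double sum is exactly $\sum_i\varepsilon(\ad^*x(e_i^*))\iota(e_i)$, the even derivation of $\Lambda^{\infty/2+\bullet}L^*$ extending the coadjoint action on $1$-forms, and this derivation coincides with $\rho(x)$ of \eqref{nonzerocomponentaction}. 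Since $[x,e_i]\in L_{n+\deg e_i}$ has vanishing $e_i$-component when $n\neq 0$, every scalar $\langle e_i^*,[x,e_i]\rangle$ is zero, so the leading sum drops out; as $\beta(x)=0$ for $x\notin L_0$, this yields $\rho^\beta(x)=\rho(x)$, and in particular $C\omega_0=0$.

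For $x\in L_0$ the scalar sum $\sum_{i\le 0}\langle e_i^*,[x,e_i]\rangle$ is divergent---this is precisely why $\rho(x)$ cannot be defined through \eqref{nonzerocomponentaction}---so instead I would compute $\rho^\beta(x)\omega_0$ directly from the annihilation relations \eqref{relation:Fockmodule}, namely $\iota(e_j)\omega_0=0$ for $j>0$ and $\varepsilon(e_i^*)\omega_0=0$ for $i\le 0$. In a nonzero summand $\langle e_j^*,[x,e_i]\rangle\,{:}\iota(e_j)\varepsilon(e_i^*){:}\,\omega_0$ the coefficient forces $\deg e_j=\deg e_i$. Three cases then all give $0$: if $i=j>0$ the normal ordering produces $-\varepsilon(e_i^*)\iota(e_i)\omega_0=0$; if $i\le 0$ then $\varepsilon(e_i^*)\omega_0=0$; and if $i>0$ with $j\neq i$ then $\deg e_j=\deg e_i>0$ forces $j>0$, so $\iota(e_j)\varepsilon(e_i^*)\omega_0=-\varepsilon(e_i^*)\iota(e_j)\omega_0=0$. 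Hence $\rho^\beta(x)\omega_0=\beta(x)\omega_0$, which equals $\rho(x)\omega_0$ by the definition of $\rho(x)$ on $L_0$; so $C\omega_0=0$ and the reduction completes the proof.

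The main obstacle I anticipate is not conceptual but a matter of keeping the formal manipulations legitimate inside the completion $U_1(cl(L))^{com}$: for $x\in L_0$ the two pieces appearing in the displayed identity---the ``trace'' $\sum_{i\le 0}\langle e_i^*,[x,e_i]\rangle$ and the coadjoint derivation---are individually divergent and may never be separated. Organizing the argument so that the operator identity is invoked only for $n\neq 0$, while the $n=0$ case is settled by the manifestly finite action on $\omega_0$, is exactly what sidesteps this difficulty.
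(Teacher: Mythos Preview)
Your proof is correct and follows essentially the same route as the paper: reduce to the vacuum via the shared Clifford relations \eqref{ActonSIF}, then split into $x\in L_0$ (each normally ordered summand contains an annihilation operator, so only $\beta(x)\omega_0$ survives) and $x\in L_n$, $n\neq 0$ (the diagonal scalars $\langle e_i^*,[x,e_i]\rangle$ vanish, so the normal ordering drops and one recovers the coadjoint derivation). The only cosmetic difference is that for $n\neq 0$ you push to the full operator identity $\rho^\beta(x)=\rho(x)$, while the paper is content to check agreement on $\omega_0$; given your own reduction step, the extra generality is not needed, but it is harmless.
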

\begin{proof}
	It is enough to show  $\rho(x)\cdot \omega_0=\rho^\beta(x)\cdot \omega_0$, since they both satisfy (\ref{ActonSIF}), and  $\omega_0$ generates $\Lambda^{\infty/2+\bullet}L^*$. For simplicity, we assume that $x=e_{i_x}$. By definition \begin{align*}
	\rho(e_{i_x})\cdot \omega_0=\begin{cases}
	\beta(e_{i_x})\omega_0 &\mbox{~if~} e_{i_x}\in L_0, \\
	\sum_{k\geq 0} e_0^*\wedge \cdots \wedge \ad^* e_{i_x}(e_{-k}^*)\wedge \cdots &\mbox{~if~} e_{i_x}\in L_n \mbox{~and~} n\neq 0. 
	\end{cases}
	\end{align*}
	When $e_{i_x}\in L_0$, since $[L_0, L_n]\subseteq L_n$,  there is an annihilation operator in each summand $:\iota(\ad\, e_{i_x}(e_i))\varepsilon(e_i^*):$. Therefore the infinite sum  in  (\ref{rho}) acts as zero on $\omega_0$ and $\rho^\beta(e_{i_x})\cdot \omega_0=\beta(e_{i_x})\omega_0$. When $e_{i_x}\in L_n$ for some $n\neq 0$, we have $\beta(e_{i_x})=0$.  Moreover, we can drop $::$ in (\ref{rho}) as $\varepsilon(\ad^*e_{i_x}(e^*_i))$ always anticommutes with $\iota(e_{i})$ in this case.  Remember that $\iota(e_i)\cdot \omega_0=0$ for all $i> 0$, so we have
	\begin{align*}
	\rho^\beta(e_{i_x})\cdot \omega_0
	&=\sum_{i\leq 0}\varepsilon(\ad^*e_{i_x}(e_i^*))\cdot (-1)^{i}e_0^*\wedge\cdots \wedge \hat{e}^*_{i}\wedge\cdots\\
	&=\sum_{i\leq 0} e_0^*\wedge\cdots \wedge \ad^*e_{i_x}(e_i^*)\wedge\cdots.
	\end{align*}
	
\end{proof}
The centers of the Clifford algebra $Cl(L\oplus L^*)$ and its completion  $U_1(cl(L))^{com}$ are both trivial, i.e., they only contain the constants. For $x, y \in L$, let 
\begin{align}\label{eq:gammadefinition}
\gamma^{\beta}(x, y):=[\rho^{\beta}(x), \rho^{\beta}(y)]-\rho^{\beta}([x, y]).
\end{align}	
It is clear that $\Lambda^{\infty/2+\bullet}L^*$ admits an $L$-module structure under $\rho^\beta(x)$  if and only if $\gamma^\beta\equiv 0$.  One can show that  $\gamma^{\beta}(x, y)$ is central hence a constant in $U_1(cl(L))^{com}$. 
Indeed, it is a 2-cocycle of $L$ \cite{A.Voronov1993}, 
and satisfies  $\gamma^\beta(L_m, L_n)=0$ whenever $m+n\neq 0$.

\begin{definition}
	We say that $L$ admits a {\em semi-infinite structure} through $\rho^{\beta}$ if $\gamma^{\beta}\equiv 0$, i.e., if $\Lambda^{\infty/2+\bullet}L^*$  is an $L$-module under the action $\rho^{\beta}(x)$.  We say that $L$ admits a semi-infinite structure if $\gamma^\beta\equiv 0$ for some $\beta\in L^*_0$.
\end{definition}

\begin{example}
	If $L$ is abelian, it always admits a semi-infinite structure.
	When $H^2(L, \C)=0$, every 2-cocycle is a coboundary. If $\gamma^\beta\neq 0$, we can choose some $\beta'\in L^*$ (by \cite{A.Voronov1993}, we can choose $\beta'\in L^*_0$), such that $\partial \beta'=\gamma^\beta$, then $\rho^{\beta-\beta'}$ gives a semi-infinite structure on $L$.  For example, affine Kac-Moody algebras and the Virasoro algebra admit semi-infinite structures. 
\end{example} 
Let $\mf a$ be a finite-dimensional Lie algebra. Let $\hat{\mf a}:=\mf a\otimes\C[t, t^{-1}]$ be equipped with bracket: $[a\otimes t^n, b\otimes t^m]=[a, b]\otimes t^{n+m}$ for all $a, b\in \mf a$ and $ m, n\in \Z$, where $\C[t, t^{-1}]$ is the ring of Laurent polynomials. It has a natural $\Z$-grading with $\hat{\mf a}_n:=\mf a\otimes t^n$. 
\begin{prop}[\cite{XH}]\label{nilpotentsemistructure}
	Let $\mf n$ be a finite-dimensional nilpotent Lie algebra.  Then $\hat{\mf n}$ admits a semi-infinite structure through $\rho^0$, where $0\in \hat{\mf n}_0$ is the zero function. 
\end{prop}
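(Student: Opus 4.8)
The strategy is to prove directly that the obstruction cocycle $\gamma^0$ attached to $\rho^0$ vanishes identically, since $\Lambda^{\infty/2+\bullet}L^*$ carries a semi-infinite structure through $\rho^0$ exactly when the cocycle $\gamma^0$ of \eqref{eq:gammadefinition} satisfies $\gamma^0\equiv 0$. Because $\gamma^0$ is a $2$-cocycle with $\gamma^0(L_m,L_n)=0$ whenever $m+n\neq 0$, it is enough to evaluate it on homogeneous elements $x=u\otimes t^m$ and $y=v\otimes t^{-m}$ with $u,v\in\mf n$ and $m\in\Z$. By antisymmetry I may assume $m\geq 0$, and the case $m=0$ is immediate: then $\ad(u\otimes t^0)$ and $\ad(v\otimes t^0)$ preserve the grading, so their off-diagonal blocks vanish. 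Thus the crux is the case $m>0$.

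Next I would make the anomaly explicit. Writing $\rho^0(x)=\sum_i :\iota(\ad x(e_i))\,\varepsilon(e_i^*):$ as a normal-ordered fermionic quadratic in the Clifford algebra, the commutator $[\rho^0(x),\rho^0(y)]$ differs from $\rho^0([x,y])$ by the usual scalar produced in reordering; relative to the polarization $L=L_+\oplus L_{\leq 0}$ that determines $\omega_0$, this scalar is the difference of traces of the off-diagonal blocks of $\ad x$ and $\ad y$ (the Kac--Raina formula for the infinite-wedge representation). Since $\gamma^0(x,y)$ is central, one may alternatively read it off as the coefficient of $\omega_0$ in $\big([\rho^0(x),\rho^0(y)]-\rho^0([x,y])\big)\omega_0$; either way only finitely many graded pieces contribute.

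Now I would compute this trace for $x=u\otimes t^m$, $y=v\otimes t^{-m}$ with $m>0$. The operator $\ad x$ raises degree by $m$ and $\ad y$ lowers it by $m$, so the block $\pi_{\leq 0}\,\ad x\,\pi_+$ vanishes, while the composite of $\pi_+\,\ad x\,\pi_{\leq 0}$ with $\pi_{\leq 0}\,\ad y\,\pi_+$ acts on each of the finitely many spaces $L_j=\mf n\otimes t^j$, $1\leq j\leq m$, simply as $\ad_{\mf n} u\circ\ad_{\mf n} v$ on the $\mf n$-factor. Summing the traces over these $m$ pieces yields
\[
\gamma^0(u\otimes t^m,\,v\otimes t^{-m})=m\,\operatorname{tr}_{\mf n}\!\big(\ad_{\mf n} u\circ\ad_{\mf n} v\big)=m\,\kappa_{\mf n}(u,v),
\]
where $\kappa_{\mf n}$ is the Killing form of $\mf n$. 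Finally, because $\mf n$ is nilpotent, Engel's theorem lets me realize every $\ad_{\mf n} w$ as a strictly upper-triangular matrix in a suitable basis, so $\ad_{\mf n} u\circ\ad_{\mf n} v$ is strictly upper triangular and $\kappa_{\mf n}\equiv 0$. Hence $\gamma^0\equiv 0$, and $\hat{\mf n}$ admits a semi-infinite structure through $\rho^0$.

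I expect the main obstacle to be bookkeeping rather than conceptual: pinning down the exact anomaly formula with the correct overall sign and normal-ordering convention, and checking that the relevant off-diagonal blocks are genuinely finite-rank so the traces are well defined. Once the anomaly is correctly identified with a multiple of the Killing form, the nilpotency of $\mf n$ closes the argument at once.
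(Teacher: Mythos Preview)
The paper does not actually prove this proposition; it is stated with a citation to the author's thesis \cite{XH}, so there is no in-paper argument to compare against. Your approach is the natural one and is almost certainly what appears in \cite{XH}: compute the anomaly cocycle $\gamma^0$ explicitly and identify it with a multiple of the Killing form of $\mf n$, which vanishes by nilpotency.

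Your argument is correct. The reduction to pairs $(u\otimes t^m,\,v\otimes t^{-m})$ uses the grading constraint on $\gamma^0$; the $m=0$ case is handled by the observation that $\ad(u\otimes 1)$ preserves each $\hat{\mf n}_n$ and hence has no off-diagonal block relative to the polarization $\hat{\mf n}_+\oplus\hat{\mf n}_{\leq 0}$; and for $m>0$ the only nonvanishing off-diagonal compositions act on the finitely many pieces $\hat{\mf n}_1,\dots,\hat{\mf n}_m$, each a copy of $\mf n$, as $\ad_{\mf n}u\circ\ad_{\mf n}v$, giving $\gamma^0(u\otimes t^m,v\otimes t^{-m})=\pm\,m\,\kappa_{\mf n}(u,v)$. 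Engel's theorem (or the standard fact that the Killing form of a nilpotent Lie algebra vanishes) then finishes the proof. Your caveat about signs and normal-ordering conventions is well placed but purely cosmetic: the conclusion $\kappa_{\mf n}\equiv 0$ makes the overall sign irrelevant.
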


\subsection{Semi-infinite cohomology} In this subsection,  we assume that $L$ is a quasi-finite $\Z$-graded Lie algebra admitting a semi-infinite structure through $\rho^\beta$ defined by  (\ref{rho}) for some $\beta\in L_0^*$. 
\medskip 

Let $\theta^\beta: L\rightarrow U(L)\otimes U_1(cl(L))^{com}$ be the map defined by  \begin{align}\label{eq:thetadefinition}
\theta^\beta(x):=x+\rho^\beta(x)\quad  (\mbox{~more precisely~} x\otimes 1+1\otimes\rho^\beta(x)),
\end{align} 
which is obviously a Lie algebra homomorphism.  Let $M$ be a smooth $L$-module.  
Then $M\otimes\Lambda^{\infty/2+\bullet} L^*$ is a $U(L)\otimes U_1(cl(L))^{com}$-module hence a smooth $L$-module under the action $\theta^\beta(x)$.  Since $x$ commutes with $\iota(L)$ and $\varepsilon(L^*)$,  for all $y\in L, z^*\in L^*$, we have:
$[\theta^\beta(x), \iota(y)]=\iota([x, y])$  and $[\theta^\beta(x),  \varepsilon(z^*)]=\varepsilon(\ad^*x(z^*)).$

Let 
\begin{align}\label{dd}
d^\beta&=\sum_{i\in \Z}e_i\varepsilon
(e_i^*)-\sum_{i<j}:\iota([e_i, e_j])\varepsilon
(e_i^*)\varepsilon
(e_j^*):+\varepsilon(\beta)\nonumber\\
&=\sum_{i\in \Z}e_i\varepsilon
(e_i^*)-\dfrac{1}{2}\sum_{i, j\in \Z}:\iota([e_i, e_j])\varepsilon
(e_i^*)\varepsilon
(e_j^*):+\varepsilon(\beta).
\end{align} 
Then $d^{\beta}\in U(L)^{com}\otimes U_1(cl(L))^{com}$ has a well-defined action on $M\otimes \Lambda^{\infty/2+\bullet} L^*$.

\begin{lemma}\label{dbeta}
	We have $[d^\beta, \iota(x)]=\theta^\beta(x)$ 	for all $x\in L$. 
\end{lemma}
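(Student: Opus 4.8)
The plan is to read $[d^\beta,\iota(x)]$ as the super-anticommutator of two odd elements (both $d^\beta$ and $\iota(x)$ are odd), so that $\mathrm{ad}_{\iota(x)}:=[\iota(x),-]$ is an odd superderivation of the completed Clifford algebra $U_1(cl(L))^{com}$, extended to $U(L)^{com}\otimes U_1(cl(L))^{com}$ by letting it act trivially on the $U(L)^{com}$-factor. The identity is then the semi-infinite analogue of Cartan's magic formula $[d,\iota_x]=L_x$. I will compute $[d^\beta,\iota(x)]=\mathrm{ad}_{\iota(x)}(d^\beta)$ by applying this derivation to each of the three summands of $d^\beta$ in (\ref{dd}), using only the fundamental relations $[\iota(x),\iota(y)]=0$, $[\iota(x),\varepsilon(z^*)]=\langle z^*,x\rangle$ (in $U_1(cl(L))^{com}$, where $K=1$), and $[\iota(x),e_i]=0$.

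Carrying this out, the linear term contributes $\mathrm{ad}_{\iota(x)}\big(\sum_i e_i\varepsilon(e_i^*)\big)=\sum_i e_i\,\langle e_i^*,x\rangle=x$, since $\sum_i\langle e_i^*,x\rangle e_i=x$; the term $\varepsilon(\beta)$ contributes $\langle\beta,x\rangle=\beta(x)$. For the quadratic ghost term, contracting $\iota(x)$ against each of the two factors $\varepsilon(e_i^*)$ and $\varepsilon(e_j^*)$ (the $\iota([e_i,e_j])$-factor being annihilated by $\iota(x)$) produces $-\langle e_i^*,x\rangle\,{:}\iota([e_i,e_j])\varepsilon(e_j^*){:}+\langle e_j^*,x\rangle\,{:}\iota([e_i,e_j])\varepsilon(e_i^*){:}$. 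Resumming with $\sum_i\langle e_i^*,x\rangle e_i=x$, the two pieces each give $\tfrac12\sum_i{:}\iota([x,e_i])\varepsilon(e_i^*){:}$, and together with the overall factor $-\tfrac12$ and the antisymmetry of the bracket they combine to $\sum_i{:}\iota([x,e_i])\varepsilon(e_i^*){:}=\rho^\beta(x)-\beta(x)$. Adding the three contributions yields $x+\rho^\beta(x)=\theta^\beta(x)$, as required.

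The step I expect to be the main obstacle is the bookkeeping of the normal ordering in the quadratic term, which is precisely what makes the infinite sum in (\ref{dd}) well-defined in the first place. Dropping the ordering naively would introduce a spurious constant $\sum_{i>0}\langle e_i^*,[x,e_i]\rangle$ (a ``trace'' of $\ad x$ on $L_+$), which is generally nonzero and would destroy the identity. The clean way around this is to record the auxiliary fact that $\mathrm{ad}_{\iota(x)}$ commutes with the normal ordering, i.e. $[\iota(x),{:}X{:}]={:}[\iota(x),X]{:}$ for every monomial $X$ in the generators $\iota(e_k),\varepsilon(e_l^*)$. This holds because $\mathrm{ad}_{\iota(x)}$ sends each generator to a scalar ($0$ or $\langle e_l^*,x\rangle$) and ${:}cY{:}=c\,{:}Y{:}$ for a scalar $c$; I would verify it on a quadratic monomial such as ${:}\iota(e_i)\varepsilon(e_i^*){:}=-\varepsilon(e_i^*)\iota(e_i)$ (for $i>0$) and then invoke the superderivation property for the general case. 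Granting this, the contraction of $\iota(x)$ against the normal-ordered quadratic term lands directly on the normal-ordered expression for $\rho^\beta(x)$, with no anomalous constant, and the asserted equality holds on the nose.
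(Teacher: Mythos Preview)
Your proof is correct and follows essentially the same route as the paper's: a term-by-term computation of the superbracket $[d^\beta,\iota(x)]$, yielding $x$, $\beta(x)$, and $\sum_i{:}\iota([x,e_i])\varepsilon(e_i^*){:}$ from the three summands of (\ref{dd}). The only cosmetic differences are that the paper specializes to $x=e_k$ by linearity and uses the $\sum_{i<j}$ form of the quadratic term rather than $\tfrac12\sum_{i,j}$; your explicit justification that $\mathrm{ad}_{\iota(x)}$ commutes with normal ordering is a point the paper leaves implicit.
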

\begin{proof}
	For simplicity, we assume that $x=e_k$ for some $k\in \Z$. Then 
	\begin{align*}
	\left[\sum_{i\in \Z}e_i\varepsilon
	(e_i^*)+\varepsilon(\beta), \iota(e_k)\right]&=e_k+\beta(e_k), 
	\end{align*}
	and 
	\begin{align*}
	&-\sum_{i<j}\left[:\iota([e_i, e_j])\varepsilon
	(e_i^*)\varepsilon
	(e_j^*):, \iota(e_k)\right]\\
	&\qquad=-\sum_{i<k}:\iota([e_i, e_k])\varepsilon
	(e_i^*):+\sum_{k<j}:\iota([e_k, e_j])\varepsilon
	(e_j^*):\\
	&\qquad=\sum_{i\in \Z}:\iota(\ad\,e_k(e_i))\varepsilon(e_i^*):.
	\end{align*}
	Therefore, we have $[d^\beta, \iota(e_k)]=\theta^\beta(e_k)$. 
\end{proof}

Define a charge grading on $cl(L)$ by setting 
\begin{align}\label{def:chargegradcl}
-\cdeg\, \iota(x)=\cdeg\,\varepsilon(y^*)=1 \quad \mbox{~for~} x\in L, y^*\in L^*, \quad \mbox{~and~ }\quad \cdeg\, K=0 .
\end{align}
When we refer to the charge grading, we will add the superscript $^\divideontimes$. 
The charge grading on $cl(L)$  induces charge gradings on $U(cl(L))$ and  $Cl(L\oplus L^*)$. The space $\Lambda^{\infty/2+\bullet} L^*$ inherits a charge grading if we set $\cdeg\, \omega_0=0$,  and we have $$\Lambda^{\infty/2+n} L^*:=(\Lambda^{\infty/2+\bullet} L^*)_n^\divideontimes=\lspan_\C\{\iota(e_{i_1})\cdots\iota(e_{i_s})\varepsilon(e^*_{j_1})\cdots \varepsilon(e_{j_t}^*)\cdot \omega_0~|~ t-s=n\}.$$
With respect to the charge grading, the operator $\rho^\beta(x)$ is of degree zero for all $x\in L$, so each component $\Lambda^{\infty/2+n} L^*$ is an $L$-submodule. If we define the charge degree of $M$ to be zero, then $d^\beta$ is a charge degree $1$ operator on  $M\otimes \Lambda^{\infty/2+\bullet} L^*$.

\begin{prop}[\cite{A.Voronov1993}, Proposition 2.6]\label{dproperties}	 The operator $d^\beta$ does not depend on the choice of basis of $L$, and  $(d^\beta)^2=0.$ 
\end{prop}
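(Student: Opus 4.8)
The plan is to pin down $d^\beta$ through its supercommutators with the operators $\iota(x)$ and to exploit the charge grading of \eqref{def:chargegradcl} together with a commutant argument, rather than to expand $(d^\beta)^2$ by Wick's theorem. Two facts drive everything. First, \cref{dbeta} gives $[d^\beta,\iota(x)]=\theta^\beta(x)$ for all $x\in L$. Second — and this is where the hypothesis that $L$ carries a semi-infinite structure enters — the condition $\gamma^\beta\equiv 0$ from \eqref{eq:gammadefinition} says exactly that $\theta^\beta$ of \eqref{eq:thetadefinition} is a Lie algebra homomorphism, so that $[\theta^\beta(x),\theta^\beta(y)]=\theta^\beta([x,y])$. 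I will also use the relation $[\theta^\beta(x),\iota(y)]=\iota([x,y])$ recorded before \eqref{dd}. All brackets are understood in the super sense, so that $[\,\cdot\,,\iota(y)]$ is an anticommutator on odd elements.

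The engine is a commutant principle: if $A\in U(L)^{com}\otimes U_1(cl(L))^{com}$ is homogeneous of strictly positive charge degree and $[A,\iota(x)]=0$ for every $x\in L$, then $A=0$. This holds because $\{\iota(e_i),\varepsilon(e_j^*)\}=\delta_{ij}$ makes $[\,\cdot\,,\iota(e_i)]$ act as the odd derivation $\partial/\partial\varepsilon(e_i^*)$ that deletes one creation operator; if it annihilates $A$ for all $i$, then no normal-ordered monomial of $A$ contains any factor $\varepsilon(z^*)$, so each such monomial involves only the $\iota$'s and the $U(L)$-factor and therefore has charge degree $\leq 0$ by \eqref{def:chargegradcl}, contradicting positivity unless $A=0$. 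I will apply this principle twice.

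Because $d^\beta$ is odd, $[(d^\beta)^2,A]=[d^\beta,[d^\beta,A]]$ for any $A$, so
\begin{align*}
[(d^\beta)^2,\iota(x)]=[d^\beta,[d^\beta,\iota(x)]]=[d^\beta,\theta^\beta(x)]
\end{align*}
by \cref{dbeta}. I then bracket this charge-degree-one operator with an arbitrary $\iota(y)$: expanding $[[d^\beta,\theta^\beta(x)],\iota(y)]$ by the graded Jacobi identity and inserting $[d^\beta,\iota(y)]=\theta^\beta(y)$, $[d^\beta,\iota([x,y])]=\theta^\beta([x,y])$ and $[\theta^\beta(x),\iota(y)]=\iota([x,y])$ collapses it to $\theta^\beta([x,y])-[\theta^\beta(x),\theta^\beta(y)]$, which vanishes precisely because $\theta^\beta$ is a homomorphism. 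Hence $[(d^\beta)^2,\iota(x)]$ has charge degree one and commutes with every $\iota(y)$, so the commutant principle gives $[(d^\beta)^2,\iota(x)]=0$. Now $(d^\beta)^2$ has charge degree two and commutes with all $\iota(x)$, and a second application yields $(d^\beta)^2=0$. The use of $\gamma^\beta\equiv 0$ is essential: dropping it, the same computation returns a term proportional to $\gamma^\beta$, which is the conceptual content of the statement.

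For basis independence I run the same argument on a difference. Let $d$ and $d'$ be built from two graded bases of the type fixed above. Both satisfy $[d,\iota(x)]=\theta^\beta(x)=[d',\iota(x)]$, since the right-hand side is the intrinsic action of $x$ on $M\otimes\Lambda^{\infty/2+\bullet}L^*$: for $x\in L_n$ with $n\neq 0$ this is the coadjoint action of \eqref{nonzerocomponentaction}, and for $x\in L_0$ the operator $\rho^\beta(x)$ of \eqref{rho} is fixed by $\beta$, its only basis-dependent ingredient being a normal-ordering constant equal to a partial trace of $\ad x$, which is basis-independent. Thus $d-d'$ commutes with all $\iota(x)$ and has charge degree one, so $d-d'=0$. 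The step I expect to demand the most care is making the commutant principle rigorous inside the completion $U_1(cl(L))^{com}$ — checking that the $\varepsilon$-deleting derivation is well defined on the admissible infinite normal-ordered sums and respects charge homogeneity — and, in the basis-independence step, confirming that the $L_0$-contribution to $\theta^\beta$ really is basis-independent despite the basis-dependence of the vacuum $\omega_0$; the bracket manipulations themselves are then purely formal.
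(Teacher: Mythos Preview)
The paper itself does not supply a proof of this proposition; it is simply quoted from \cite{A.Voronov1993}. That said, your argument is exactly the method the paper later develops in \cref{sec:2} to treat the adjusted differential $\bar d^\beta$: your commutant principle is \cref{chargedegreenegative}, your double--bracket computation $[[(d^\beta)^2,\iota(x)],\iota(y)]=\theta^\beta([x,y])-[\theta^\beta(x),\theta^\beta(y)]$ is \eqref{eq:whynotzero} verbatim, and the two-step deduction (first $[(d^\beta)^2,\iota(x)]=0$, then $(d^\beta)^2=0$) is the proof of \cref{maintheorem} specialized to $\gamma^\beta\equiv 0$. Your uniqueness argument for basis independence is likewise the content of \cref{uniquederivation}. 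So your proof is correct and coincides with the strategy the paper itself adopts one section later.
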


\begin{definition}\label{def:ordinarySIC}
	The complex  $(M\otimes \Lambda^{\infty/2+\bullet} L^*, d^\beta)$ is called the {\em Feigin standard complex} and its cohomology $H^{\infty/2+\bullet}(L, \beta,  M)$ the {\em semi-infinite cohomology} of $L$ with coefficients in $M$. When $\beta=0$, we write just as $H^{\infty/2+\bullet}(L,  M)$.
\end{definition}

\begin{rmk}
	There is an interesting characterization of $d^\beta$ in \cite{FusunAkman} (and in \cite{Arakawa2} for affine W-algebras in the principal nilpotent case). To contrast with our adjusted version in the next section, we call the cohomology in \Cref{def:ordinarySIC} ordinary semi-infinite cohomology. 
\end{rmk}
If $\rho^{\beta'}$ gives another semi-infinite structure on $L$, then we have $(\beta-\beta')([L, L])=0$, and $\beta-\beta'$ defines a 1-dimensional module $\C_{\beta-\beta'}$, on which $x\in L$ acts as $(\beta-\beta')(x)$. 
\begin{prop}[\cite{A.Voronov1993}, Proposition 2.7] If both $\rho^\beta$ and $\rho^{\beta'}$ give semi-infinite structures on $L$, then 
	$$H^{\infty/2+\bullet}(L, \beta, M) \cong H^{\infty/2+\bullet}(L, \beta', M\otimes \C_{\beta-\beta'}).$$
\end{prop}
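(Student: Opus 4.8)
The plan is to prove more than the stated cohomology isomorphism: I claim the two Feigin standard complexes are \emph{isomorphic as complexes}, the twist by $\C_{\beta-\beta'}$ being exactly what is needed to convert $d^{\beta'}$ into $d^\beta$. The crucial structural observation is that, among the three summands of $d^\beta$ in~(\ref{dd}), only the last one depends on $\beta$, since the other two are assembled from the bracket of $L$ and the Clifford operators $\iota,\varepsilon$ alone. Consequently, as elements of $U(L)^{com}\otimes U_1(cl(L))^{com}$,
\begin{align*}
d^\beta-d^{\beta'}=\varepsilon(\beta)-\varepsilon(\beta')=\varepsilon(\beta-\beta').
\end{align*}

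First I would fix a generator $v$ of the one-dimensional module $\C_{\beta-\beta'}$ and use it to build the vector space isomorphism
\begin{align*}
\Phi\colon (M\otimes\C_{\beta-\beta'})\otimes\Lambda^{\infty/2+\bullet}L^*\longrightarrow M\otimes\Lambda^{\infty/2+\bullet}L^*,\qquad (m\otimes v)\otimes\omega\longmapsto m\otimes\omega.
\end{align*}
Since both $M$ and $\C_{\beta-\beta'}$ sit in charge degree $0$, the map $\Phi$ is homogeneous of charge degree zero and therefore respects the two charge gradings. The whole proof then reduces to computing the transported differential $\Phi\,d^{\beta'}\,\Phi^{-1}$ on $M\otimes\Lambda^{\infty/2+\bullet}L^*$ and verifying that it equals $d^\beta$. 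Because $\Phi$ changes nothing but the way $U(L)$ acts on the first tensor factor, it suffices to track the single summand $\sum_i e_i\varepsilon(e_i^*)$ of $d^{\beta'}$ in which an element of $L$ acts on $M\otimes\C_{\beta-\beta'}$; the purely Clifford summand and the scalar $\varepsilon(\beta')$ are left untouched.

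The heart of the matter is the short computation that, on $M\otimes\C_{\beta-\beta'}$, each $e_i$ acts as $e_i+(\beta-\beta')(e_i)$, so after transport through $\Phi$ the first summand becomes
\begin{align*}
\sum_i\bigl(e_i+(\beta-\beta')(e_i)\bigr)\varepsilon(e_i^*)=\sum_i e_i\varepsilon(e_i^*)+\varepsilon(\beta-\beta'),
\end{align*}
where I have used $\beta-\beta'=\sum_i(\beta-\beta')(e_i)e_i^*$ and the linearity of $\varepsilon$. Reinstating the unchanged summands and comparing with the displayed identity $d^\beta-d^{\beta'}=\varepsilon(\beta-\beta')$ gives $\Phi\,d^{\beta'}\,\Phi^{-1}=d^\beta$, so $\Phi$ is an isomorphism of complexes and the cohomologies agree. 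The statement is well-posed because, as recorded just before it, $\gamma^\beta\equiv\gamma^{\beta'}\equiv 0$ forces $(\beta-\beta')([L,L])=0$, so $\C_{\beta-\beta'}$ is a genuine (and, being concentrated in $L_0$-weight, smooth) $L$-module. I expect no real obstacle here, only careful bookkeeping: one must be sure that $\Phi$ intertwines \emph{only} the $U(L)$-actions while leaving the Clifford action identical on both sides, and that the identity $\sum_i(\beta-\beta')(e_i)\varepsilon(e_i^*)=\varepsilon(\beta-\beta')$ is applied with the correct module twist. Everything else follows from the fact that the entire $\beta$-dependence of the construction is concentrated in the single term $\varepsilon(\beta)$.
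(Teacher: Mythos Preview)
Your argument is correct: the map $\Phi$ is an isomorphism of graded vector spaces, and the computation showing that transporting $d^{\beta'}$ through $\Phi$ yields $d^\beta$ is exactly right, the point being that the $L$-action on $M\otimes\C_{\beta-\beta'}$ differs from that on $M$ by the scalar shift $(\beta-\beta')(\cdot)$, which produces precisely the extra term $\varepsilon(\beta-\beta')=d^\beta-d^{\beta'}$. The identity $\sum_i(\beta-\beta')(e_i)\varepsilon(e_i^*)=\varepsilon(\beta-\beta')$ is unproblematic because $\beta,\beta'\in L_0^*$, so the sum is finite.

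Note, however, that the paper does not supply its own proof of this proposition: it simply quotes the result from \cite{A.Voronov1993}, Proposition~2.7, and moves on. Your argument is the standard one and is essentially what Voronov does, so there is nothing to compare against in the present paper beyond observing that your proof fills in what the author chose to cite rather than reproduce.
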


\medskip
\section{An adjustment when the 2-cocycle $\gamma^\beta$ is not identically zero}\label{sec:2}

Recall the notation in the previous section. We assume that $\gamma^\beta$ is not identically zero in this section,  i.e.,  $\rho^\beta$ does not give a semi-infinite structure on $L$. 

\subsection{What is the problem }
Let $d^\beta$ be defined by (\ref{dd}) and consider the value $[[(d^\beta)^2, \iota(x)], \iota(y)]$ for $x, y\in L$.  Since $d^\beta$ is odd, we have  $(d^\beta)^2=\dfrac{1}{2}[d^\beta, d^\beta]$ hence $[(d^\beta)^2, \iota(x)]=[d^\beta, [d^\beta, \iota(x)]]$. By \Cref{dbeta}, we have  $[d^\beta, \iota(x)]=\theta^\beta(x)$ (though we assume $\gamma^\beta\equiv 0$ in that section, the calculations there still hold), so 
\begin{align}\label{eq:whynotzero}
[[(d^\beta)^2, \iota(x)], \iota(y)]&=
[[d^{\beta}, \theta^{\beta}(x)], \iota(y)]\nonumber\\
&=[d^{\beta}, [\theta^{\beta}(x), \iota(y)]]+[[d^{\beta}, \iota(y)], \theta^{\beta}(x)]\nonumber\\
&=[d^{\beta}, \iota([x, y])]+[\theta^{\beta}(y), \theta^{\beta}(x)]\nonumber\\
&=\theta^{\beta}([x, y])-[\theta^{\beta}(x), \theta^{\beta}(y)].
\end{align}
Note that $\theta^{\beta}([x, y])-[\theta^{\beta}(x), \theta^{\beta}(y)]=-\gamma^{\beta}(x, y)$. In particular, the operator $d^\beta$ is not square zero if $\gamma^{\beta}$ is not identically zero. This is the problem!\medskip

\subsection{Construction of new ``ghosts''.}
Let $\ker\, \gamma^\beta:=\{x\in L~|~ \gamma(x, L)\equiv 0\}$ be the radical of $\gamma^\beta$, which is obviously a graded subalgebra of $L$. Let $F_\beta$ be a graded complement of $\ker\, \gamma^\beta$ in $L$. 
Let $\epsilon(F_\beta)$ be a copy of $F_\beta$. For $x\in L$, let $\epsilon(x)$ be the projection in $F_\beta$ but considered as an element of $\epsilon(F_\beta)$. Note that $\epsilon(\ker\,\gamma^\beta)=0$. Consider the following Lie superalgebra (which contains $cl(L)$ as a subalgebra,) $$c(L):=\iota(L)\oplus \varepsilon(L^*)\oplus \C K\oplus \epsilon(F_\beta),$$ where  $\epsilon(F_\beta)$ is defined to be even, it commutes with $cl(L)$ and has bracket: for $x, y\in F_\beta$, $[\epsilon(x), \epsilon(y)]:=-\gamma^\beta(x, y)K$.  
The subalgebra $\epsilon(F_\beta)\oplus \C K$ is $\Z$-graded.  
Its abelian subalgebra $\epsilon(F_\beta)_+:=\left(\bigoplus_{n>0}\epsilon(F_\beta)_n\right)\oplus \C K$ has a  1-dimensional module $\C$, on which $\bigoplus_{n>0}\epsilon(F_\beta)_n$ acts as zero and $K$ acts  identity. The  {\em Fock representation} of $\epsilon(F_\beta)\oplus \C K$ is defined to be the induced (and obviously smooth) module
\begin{equation}\label{eq:fockrepres}
\mf F_{\beta}=\Ind_{\epsilon(F_\beta)_+}^{\epsilon(F_\beta)\oplus \C K}\C.
\end{equation}
Remember that $\Lambda^{\infty/2+\bullet} L^*$ is a smooth $cl(L)$-module on which $K$ also acts as identity, so  $\Lambda^{\infty/2+\bullet}L^*\otimes \mf F_\beta$ is a smooth $c(L)$-module. 
\medskip 

Let $s(L)=L\oplus c(L)$ be the direct sum of $L$ and $c(L)$. For $x\in L$, let 
\begin{align*}
\bar{\rho}^\beta(x):=\rho^\beta(x)+\epsilon(x)\in U_1(c(L))^{com}:=U(c(L))^{com}/(K-1),
\end{align*}
and 
\begin{align}\label{thetabar}
\bar{\theta}^\beta(x)=x+\bar{\rho}^\beta(x)\in U_1(s(L))^{com}:=U(s(L))^{com}/(K-1). 
\end{align}
Then $\bar{\rho}^\beta(x)$ has a well-defined action on $\Lambda^{\infty/2+\bullet}L^*\otimes \mf F_\beta$. Let $M$ be a smooth $L$-module. Then $\bar{\theta}^\beta(x)$ has a well-defined action on $M\otimes \Lambda^{\infty/2+\bullet}L^*\otimes \mf F_\beta$. Moreover, both $\bar{\rho}^\beta(x)$ and $\bar{\theta}^\beta(x)$ satisfy the relations (\ref{ActonSIF}),  and for all $x, y\in L$,  we have
\begin{align}\label{rhobar}
[\bar{\rho}^\beta(x), \epsilon(y)]=[\bar{\theta}^\beta(x), \epsilon(y)]=-\gamma^\beta(x, y).
\end{align}

\begin{lemma}\label{liealghom}
	The map $\bar{\rho}^\beta: L \longrightarrow U_1(c(L))^{com}$ sending $x$ to $\bar{\rho}^\beta(x)$ is a Lie algebra homomorphism if $[L, L]\subseteq \ker\,\gamma^\beta$. 
\end{lemma}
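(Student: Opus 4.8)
The plan is to verify the defining identity of a Lie algebra homomorphism directly, namely
\begin{align*}
[\bar{\rho}^\beta(x), \bar{\rho}^\beta(y)] = \bar{\rho}^\beta([x, y]) \qquad \mbox{for all } x, y \in L,
\end{align*}
since $\bar{\rho}^\beta$ is manifestly linear by construction. All of the brackets below are legitimate elements of the completed enveloping algebra $U_1(c(L))^{com}$ because $\Lambda^{\infty/2+\bullet}L^*\otimes \mf F_\beta$ is a smooth $c(L)$-module, so there is nothing delicate about well-definedness.

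First I would expand the left-hand side using $\bar{\rho}^\beta(x) = \rho^\beta(x) + \epsilon(x)$:
\begin{align*}
[\bar{\rho}^\beta(x), \bar{\rho}^\beta(y)] = [\rho^\beta(x), \rho^\beta(y)] + [\rho^\beta(x), \epsilon(y)] + [\epsilon(x), \rho^\beta(y)] + [\epsilon(x), \epsilon(y)].
\end{align*}
The two mixed terms vanish because $\epsilon(F_\beta)$ commutes with $cl(L)$ by the very definition of $c(L)$, while $\rho^\beta(x), \rho^\beta(y)$ lie in $U_1(cl(L))^{com}$. Hence only the first and last terms survive.

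Next I would evaluate the two surviving brackets separately. The definition (\ref{eq:gammadefinition}) gives $[\rho^\beta(x), \rho^\beta(y)] = \rho^\beta([x, y]) + \gamma^\beta(x, y)$, while the bracket on $\epsilon(F_\beta)$ together with $K = 1$ in $U_1(c(L))^{com}$ gives $[\epsilon(x), \epsilon(y)] = -\gamma^\beta(x, y)$; here one uses that $\gamma^\beta$ vanishes as soon as one of its arguments lies in $\ker\gamma^\beta$, so that $\gamma^\beta$ only sees the $F_\beta$-components and the formula is consistent with $\epsilon$ being the projection onto $F_\beta$. Adding these, the central anomaly $\gamma^\beta(x, y)$ cancels automatically, leaving
\begin{align*}
[\bar{\rho}^\beta(x), \bar{\rho}^\beta(y)] = \rho^\beta([x, y]).
\end{align*}

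Finally, comparing with $\bar{\rho}^\beta([x, y]) = \rho^\beta([x, y]) + \epsilon([x, y])$, the homomorphism identity holds precisely when $\epsilon([x, y]) = 0$. Since $\epsilon$ kills $\ker\gamma^\beta$, this is equivalent to $[x, y]\in\ker\gamma^\beta$, which is exactly the hypothesis $[L, L]\subseteq\ker\gamma^\beta$. I expect no genuine obstacle here: the computation is routine, and the only conceptual content is recognizing that once the Fock ghosts $\epsilon(F_\beta)$ are added the anomaly $\gamma^\beta$ cancels for free, while the residual term $\epsilon([x, y])$ obstructs the homomorphism property unless $[L,L]$ lands in the radical of $\gamma^\beta$. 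The one point to keep careful track of is the projection convention built into $\epsilon$ and the resulting identification $\gamma^\beta(x,y)=\gamma^\beta\big(\epsilon(x),\epsilon(y)\big)$.
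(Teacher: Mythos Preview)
Your proof is correct and follows essentially the same route as the paper: expand the bracket bilinearly, kill the mixed terms since $\epsilon(F_\beta)$ commutes with $cl(L)$, use the definition of $\gamma^\beta$ and the bracket on $\epsilon(F_\beta)$ to see the anomaly cancel, and then observe that $\bar{\rho}^\beta([x,y]) = \rho^\beta([x,y])$ precisely when $\epsilon([x,y])=0$. The paper's proof is terser but identical in substance; your added remarks on the projection convention and the identification $\gamma^\beta(x,y)=\gamma^\beta$ of the $F_\beta$-components are correct and harmless.
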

\begin{proof}
	We need to prove  $\bar{\rho}^\beta([x, y])=[\bar{\rho}^\beta(x), \bar{\rho}^\beta(y)]$ for all $x, y \in L$. But we have 
	\begin{align*}
	[\bar{\rho}^\beta(x), \bar{\rho}^\beta(y)]&=[\rho^\beta(x)+\epsilon(x), \rho^\beta(y)+\epsilon(y)]\\
	&=[\rho^\beta(x), \rho^\beta(y)]+[\epsilon(x), \epsilon(y)]\\
	&=\rho^\beta([x, y])+\gamma^\beta(x, y)-\gamma^\beta(x, y)\\
	&=\rho^\beta([x, y]),
	\end{align*}
	and $\bar{\rho}^\beta([x, y])=\rho^\beta([x, y])$ if $\epsilon([x, y])\equiv 0$, i.e., if $[L, L]\subseteq \ker\,\gamma^\beta$. 
\end{proof}

\medskip
\noindent\textbf{Assumption:} From now on,  we assume that $[L, L]\subseteq \ker\,\gamma^\beta$ is satisfied.
\medskip

\begin{rmk}\label{rmkbeforetheassumption}
	\Cref{liealghom} tells us that the tensor product $\Lambda^{\infty/2+\bullet} L^* \otimes \mf F_\beta$ is an $L$-module under  $\bar{\rho}^\beta(x)$  though $\Lambda^{\infty/2+\bullet} L^*$ is not under the action $\rho^\beta(x)$. This Fock module $\mf F_\beta$ will be the new ``ghosts'' to be added!
\end{rmk}

\subsection{Construction and characterization of a square zero differential}
Extend the charge grading (see (\ref{def:chargegradcl})) on $cl(L)$ to $c(L)$ and $s(L)$ by setting $\cdeg~ \epsilon(F_\beta)=\cdeg\, L=0$. 
At the module level, set $\cdeg\, M=\cdeg\, \mf F_\beta=0$, then  $\Lambda^{\infty/2+\bullet} L^*\otimes \mf F_{\beta}$ is a $\Z$-graded $c(L)$-module and  $M\otimes \Lambda^{\infty/2+\bullet} L^*\otimes \mf F_{\beta}$  a $\Z$-graded $s(L)$-module with respect to the charge gradings.  Let $i_c: c(L)\hookrightarrow U_1(c(L))^{com}$ and $i_s: s(L) \hookrightarrow U_1(s(L))^{com}$ be the canonical inclusions. 
\begin{definition}
	A superderivation $D$ with respect to $i_c$ or $i_s$,  is said to be of {\em charge degree} $N$ if $D(c(L)^\divideontimes_n)\subseteq U_1(c(L))^{com, \divideontimes}_{n+N}$ or $D(s(L)^\divideontimes_n)\subseteq U_1(s(L))^{com, \divideontimes}_{n+N}$, respectively. A superderivation $D$ of $c(L)$ or of $s(L)$ is said to be of {\em charge degree} $N$ if $D(c(L)^\divideontimes_n)\subseteq c(L)^\divideontimes_{n+N}$ or $D(s(L)^\divideontimes_n)\subseteq s(L)^\divideontimes_{n+N}$, respectively.
\end{definition}

Define an action of $L$ on $c(L)$ as follows. For $x, y\in L, z\in L^*$, 
$$ x\cdot \iota(y)=\iota([x, y]), \quad x\cdot \varepsilon(z^*)=\varepsilon(\ad^*x(z^*)),\quad x\cdot \epsilon(y)=-\gamma^\beta(x, y)K,\quad x\cdot K=0.$$ 
We extend this action to $s(L)$ by letting $L$ act on itself by the adjoint action. 

\begin{rmk} The actions of $x\in L$ on $c(L)$ and $s(L)$ defined above are even derivations of charge degree zero.
	They induce even derivations of charge degree zero  on $U_1(c(L))^{com}$ and $U_1(s(L))^{com}$, respectively. The inner derivations $[\bar{\rho}^\beta(x), \cdot]$ and $[\bar{\theta}^\beta(x), \cdot]$ realize these actions, respectively, by (\ref{ActonSIF}) and (\ref{rhobar}). 
\end{rmk}
\begin{lemma}\label{chargedegreenegative}
	Let $u\in U_1(s(L))^{com}$ be a charge degree $\geq 1$ element. Then $[u, \iota(x)]=0$ for all $x\in L$ only if $u=0$.  
\end{lemma}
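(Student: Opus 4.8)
The plan is to recognize the operator $D_x := [\,\cdot\,,\iota(x)]$ as an odd superderivation of $U_1(s(L))^{com}$ and to compute it on the generators. Since $\iota(x)$ is a fixed odd element, $D_x$ is a superderivation and is therefore determined by its values on a generating set. From the brackets recorded in the excerpt one gets $D_x(\iota(y)) = [\iota(y),\iota(x)] = 0$ and $D_x(\varepsilon(z^*)) = [\varepsilon(z^*),\iota(x)] = \langle z^*, x\rangle$ (in $U_1$, where $K=1$), while $D_x$ annihilates the copy of $L$, the generators $\epsilon(F_\beta)$, and $K$: indeed $L$ commutes with $c(L)$ in the direct sum $s(L) = L\oplus c(L)$, and $\epsilon(F_\beta)$ and $K$ commute with $cl(L)$. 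Thus $D_x$ is exactly the interior product (contraction) by $x$ on the $\varepsilon(L^*)$-generators and vanishes on every other generator.

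First I would fix a PBW normal form: every element of $U_1(s(L))^{com}$ is a (completion-admissible) linear combination of ordered monomials $m = P\cdot \iota(e_{i_1})\cdots\iota(e_{i_s})\cdot\varepsilon(e_{j_1}^*)\cdots\varepsilon(e_{j_t}^*)$, where $P$ is an ordered monomial in the even generators $L$, $\epsilon(F_\beta)$, and the $\iota$- and $\varepsilon$-indices are strictly increasing (the odd generators square to zero). Because $D_x$ kills $P$ and the $\iota$-block and is a derivation, $D_x(m) = \pm\, P\cdot\iota(e_{i_1})\cdots\iota(e_{i_s})\cdot \iota_x\big(\varepsilon(e_{j_1}^*)\cdots\varepsilon(e_{j_t}^*)\big)$, with $\iota_x(\varepsilon(e_{j_1}^*)\cdots\varepsilon(e_{j_t}^*)) = \sum_k (-1)^{k-1}\langle e_{j_k}^*,x\rangle\,\varepsilon(e_{j_1}^*)\cdots\widehat{\varepsilon(e_{j_k}^*)}\cdots\varepsilon(e_{j_t}^*)$ again a combination of ordered $\varepsilon$-monomials. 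In particular $D_x$ preserves the $P$- and $\iota$-labels of a monomial and merely deletes one $\varepsilon$-factor.

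The core of the argument is then a coefficient-extraction (faithfulness-of-contraction) step. Suppose $u\neq 0$ has charge degree $\geq 1$. Then every PBW monomial occurring in $u$ has $t-s\geq 1$, hence $t\geq 1$, so it carries at least one $\varepsilon$-factor. I would choose a monomial $m = P\cdot\iota(\cdots)\cdot\varepsilon(e_{j_1}^*)\cdots\varepsilon(e_{j_t}^*)$ occurring in $u$ with nonzero coefficient $c_m$ and take $x = e_{j_1}$. The claim is that the ordered monomial $m' = P\cdot\iota(\cdots)\cdot\varepsilon(e_{j_2}^*)\cdots\varepsilon(e_{j_t}^*)$ appears in $[u,\iota(e_{j_1})]$ with nonzero coefficient: a monomial of $u$ can contribute to $m'$ only if it shares the $P$- and $\iota$-blocks and its $\varepsilon$-index set $K$ satisfies $j_1\in K$ and $K\setminus\{j_1\} = \{j_2,\dots,j_t\}$, forcing $K = \{j_1,\dots,j_t\}$; thus $m$ is the unique contributor, with coefficient $\pm c_m\neq 0$ (using $\langle e_{j_1}^*, e_{j_1}\rangle = 1$). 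Hence $[u,\iota(e_{j_1})]\neq 0$, contradicting the hypothesis, and so $u = 0$.

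The main thing to be careful about is the interaction of this contraction argument with the completion: I must ensure $D_x$ is well defined on $U_1(s(L))^{com}$ (it is, since $\iota(x)$ is a single homogeneous generator whose super-bracket respects the completion) and that reading off the coefficient of a fixed PBW monomial is legitimate. The coefficient-wise formulation is designed precisely to sidestep any convergence worry, since I showed that for the chosen target $m'$ exactly one monomial of $u$ contributes, so no infinite cancellation can occur. The only genuinely non-formal input is the faithfulness of contractions on the $\varepsilon$-block, which is exactly the uniqueness of the contributing monomial established above.
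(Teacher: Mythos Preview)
Your proof is correct and follows essentially the same approach as the paper: both arguments exploit that bracketing with $\iota(e_k)$ acts as contraction on the $\varepsilon$-factors and annihilates all other generators, so a nonzero $u$ of charge degree $\geq 1$ must survive contraction by a suitable $\iota(e_k)$. The paper phrases this more tersely---writing $u = w\varepsilon(e_k^*) + v$ (or $\varepsilon(e_k^*)w + v$) with $\varepsilon(e_k^*)$ absent from $w$ and $v$, and observing $[u,\iota(e_k)] = w \neq 0$---whereas you spell out the PBW monomial bookkeeping explicitly, but the underlying idea is identical.
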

\begin{proof}
	As $\cdeg\, u\geq 1$, if $u$ is not zero, we can write $$u=w\varepsilon(e_k^*)+v\qquad \mbox{~or~}\qquad u=\varepsilon(e_k^*)w+v$$ for some $k\in \Z$ with $w, v\in U_1(s(L))^{com}$ and $w\neq 0$, such that $\varepsilon(e_{k}^*)$ does not appear in $w$ or $v$, i.e.,  $$[w, \iota(e_k)]=[v, \iota(e_k)]=0. $$
	Then $[u, \iota(e_k)]=w\neq 0$  gives a contradiction.  
\end{proof}

\begin{lemma}\label{uniquederivation}  Let $D$ be a superderivation with respect to $i_s$ of charge degree $\geq 1$,  and suppose that $D(K)=0$. Then $D$ is determined by its value on $\iota(L)$. 
\end{lemma}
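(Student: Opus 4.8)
The plan is to use that a superderivation with respect to $i_s$ is completely determined by its values on any generating set of the Lie superalgebra $s(L)$, here the homogeneous families $L$, $\iota(L)$, $\varepsilon(L^*)$, $\C K$ and $\epsilon(F_\beta)$. Since $D(K)=0$ is given and $D|_{\iota(L)}$ is assumed known, it remains only to force the three families $D(\varepsilon(z^*))$, $D(\epsilon(y))$ and $D(x)$ for $z^*\in L^*$, $y\in F_\beta$ and $x\in L$. Write $i\in\Z_2$ for the parity of $D$. Because $\varepsilon(L^*)$, $\epsilon(F_\beta)$ and $L$ sit in charge degrees $1$, $0$ and $0$, and $D$ raises charge degree by at least $1$, each of these three images lands in charge degree $\geq 1$. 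This is exactly the hypothesis of \Cref{chargedegreenegative}, which will serve as the uniqueness engine: an element of charge degree $\geq 1$ that super-commutes with every $\iota(x)$ must vanish. Hence it suffices, for each unknown family, to pin down its super-commutators with all $\iota(x)$ in terms of the known datum $D|_{\iota(L)}$.

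For each unknown family I would apply $D$, via Leibniz's rule, to a bracket pairing that family with $\iota(L)$, and read off the relevant super-commutators. Concretely, from $[\iota(x),\varepsilon(z^*)]=\langle z^*,x\rangle K$ together with $D(K)=0$ one gets
\[
0=[D(\iota(x)),\varepsilon(z^*)]+(-1)^{i}[\iota(x),D(\varepsilon(z^*))],
\]
so that $[\iota(x),D(\varepsilon(z^*))]=-(-1)^{i}[D(\iota(x)),\varepsilon(z^*)]$, whose right-hand side involves only the known datum $D|_{\iota(L)}$. From $[\iota(x),\epsilon(y)]=0$ one likewise obtains $[\iota(x),D(\epsilon(y))]=-(-1)^{i}[D(\iota(x)),\epsilon(y)]$, again expressed through $D|_{\iota(L)}$. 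Finally, using that $s(L)=L\oplus c(L)$ is a direct sum, so $[x,\iota(z)]=0$ for $x\in L$, and that $L$ is even, one gets $[D(x),\iota(z)]=-[x,D(\iota(z))]$, once more determined by $D|_{\iota(L)}$. In each case the super-commutators of the unknown element with every $\iota(\,\cdot\,)$ are fixed, so \Cref{chargedegreenegative} forces the unknown element itself to be unique. As the three reductions each feed only on $D|_{\iota(L)}$, there is no circularity, and together with $D(K)=0$ this determines $D$ on all generators, hence everywhere.

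The step I expect to carry the weight is the $\varepsilon(L^*)$ case, and it is precisely there that the hypothesis $D(K)=0$ is indispensable: the bracket $[\iota(x),\varepsilon(z^*)]$ produces a multiple of $K$, so without $D(K)=0$ an anomalous term $\langle z^*,x\rangle D(K)$ would survive and spoil the clean solvability above. The only other point demanding care is the charge-degree bookkeeping that guarantees the images lie in charge degree $\geq 1$, which is what licenses the use of \Cref{chargedegreenegative}; everything else is a routine application of Leibniz's rule and the explicit brackets of $s(L)$.
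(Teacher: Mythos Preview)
Your proposal is correct and follows essentially the same route as the paper: apply the Leibniz rule to brackets of the generators with $\iota(L)$, use that $[s(L),\iota(L)]\subseteq\C K$ together with $D(K)=0$, and then invoke \Cref{chargedegreenegative} on the resulting charge-degree $\geq 1$ elements. The paper merely packages this as showing $(D-D')u$ super-commutes with all of $\iota(L)$ rather than computing the constraints on a single $D$, but the content is the same.
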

\begin{proof} 
	Let $D'$ be another superderivation of same parity, such that $D'(K)=0$ and coincide with $D$ on $\iota(L)$. We show that $D=D'$. For all $u, v\in s(L),$ we have
	\begin{align}\label{eq:d1minusd2}
	(D-D')[u , v]=[(D-D')u, v]+(-1)^{i\cdot p(u)} [u,   (D-D')v],
	\end{align}
	where $i$ is the parity of $D$ and $D'$. Note that $[s(L), \iota(L)]\subseteq \C K$ and $(D-D')K=(D-D')\iota(L)=0$. Let $u\in s(L), v=\iota(x)\in \iota(L)$ in (\ref{eq:d1minusd2}). Then we have 	\begin{align}\label{eq:oneequation}
	[(D-D')u, \iota(x)]=0.
	\end{align}
	If $u\in \iota(L)$, then $(D-D')u=0$. If $u\in L\oplus \epsilon(F_\beta)\oplus\varepsilon(L^*)$, then note that $\cdeg\, (D-D')u\geq 1$ if it is not zero. Since (\ref{eq:oneequation}) holds for all $\iota(x)\in \iota(L)$,  \Cref{chargedegreenegative} ensures that $(D-D')u=0$, i.e., $D=D'$ on $s(L)$. 
\end{proof}
\begin{remark} Given a charge degree $\geq 1$ superderivation with respect to the inclusion $i_{\iota(L)}: \iota(L)\rightarrow U_1(s(L))^{com}$, \Cref{uniquederivation} says that we can extend it to a superderivation of the same charge degree with respect to $i_s$ in a unique way. 
\end{remark}
Recall that $\bar{\theta}^\beta(x)$ defined by (\ref{thetabar}) is even and satisfies (\ref{ActonSIF}), in particular,
$$[\bar{\theta}^\beta(x), \iota(y)]-[\iota(x), \bar{\theta}^\beta(y)]=\iota([x, y])+\iota([y, x])=0.$$ 
As $\iota(L)$ is an abelian subalgebra of $s(L)$, the map $D: \iota(L)\rightarrow U_1(s(L))^{com}$ sending $\iota(x)$ to $\bar{\theta}^\beta(x)$ is an odd superderivation of charge degree $1$ with respect to $i_{\iota(L)}$, so it can be extended to be a superderivation with respect to $i_s$ in a unique way.

Recall the expression $d^\beta$ defined by (\ref{dd}), let \begin{align}\label{def:dbar}
\bar{d}^\beta=d^\beta+\sum_{i\in \Z}\varepsilon(e_i^*)\epsilon(e_i),
\end{align}

\begin{theorem}\label{maintheorem}
	We  have $(\bar{d}^\beta)^2=0$, and 
	it is the unique element  of charge degree $1$ in $U_1(s(L))^{com}$ satisfying $[\bar{d}^\beta, \iota(x)]=\bar{\theta}^\beta(x) \, \mbox{~for all ~} x\in L$. 
\end{theorem}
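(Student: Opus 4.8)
The plan is to establish the characterization first and deduce square-zeroness from it, since both rest on \Cref{chargedegreenegative}. First I would verify directly that $\bar{d}^\beta$ satisfies $[\bar{d}^\beta, \iota(x)] = \bar{\theta}^\beta(x)$. By \Cref{dbeta} we already have $[d^\beta, \iota(x)] = \theta^\beta(x)$, so only the bracket of the new term $\sum_{i}\varepsilon(e_i^*)\epsilon(e_i)$ against $\iota(e_k)$ needs computing. Since $\epsilon(F_\beta)$ commutes with $cl(L)$ and $[\varepsilon(e_i^*), \iota(e_k)] = \delta_{ik}K = \delta_{ik}$ in $U_1(s(L))^{com}$, the graded Leibniz rule collapses the sum to $\epsilon(e_k)$, and adding this to $\theta^\beta(e_k)$ reproduces $\bar{\theta}^\beta(e_k) = e_k + \rho^\beta(e_k) + \epsilon(e_k)$. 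Uniqueness is then immediate: if $d'$ is any charge degree $1$ element with $[d', \iota(x)] = \bar{\theta}^\beta(x)$ for all $x$, then $[\bar{d}^\beta - d', \iota(x)] = 0$ for all $x$, and \Cref{chargedegreenegative} forces $\bar{d}^\beta = d'$.

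For $(\bar{d}^\beta)^2 = 0$ I would replay the computation of (\ref{eq:whynotzero}), now with the anomaly cancelled. Since $\bar{d}^\beta$ is odd, $(\bar{d}^\beta)^2 = \frac{1}{2}[\bar{d}^\beta, \bar{d}^\beta]$, so that $[(\bar{d}^\beta)^2, \iota(x)] = [\bar{d}^\beta, \bar{\theta}^\beta(x)]$. Using $[\bar{\theta}^\beta(x), \iota(y)] = \iota([x,y])$ (the extra $\epsilon$ term commutes with $\iota$) together with the same Jacobi manipulations as in (\ref{eq:whynotzero}) gives
\[ [[(\bar{d}^\beta)^2, \iota(x)], \iota(y)] = \bar{\theta}^\beta([x,y]) - [\bar{\theta}^\beta(x), \bar{\theta}^\beta(y)]. \]
The crux is that, under the standing assumption $[L,L]\subseteq \ker\gamma^\beta$, the map $\bar{\rho}^\beta$ is a Lie algebra homomorphism by \Cref{liealghom}; since the $L$-summand of $\bar{\theta}^\beta(x) = x + \bar{\rho}^\beta(x)$ acts on $M$ and commutes with $c(L)$, the map $\bar{\theta}^\beta$ is a homomorphism as well, and the right-hand side vanishes.

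It then remains to upgrade the vanishing of this double bracket to $(\bar{d}^\beta)^2 = 0$, which I would accomplish by applying \Cref{chargedegreenegative} twice. Set $P = (\bar{d}^\beta)^2$, an even element of charge degree $2$. For each fixed $x$, the element $[P, \iota(x)]$ has charge degree $1$ and, by the identity just established, satisfies $[[P, \iota(x)], \iota(y)] = 0$ for every $y$; \Cref{chargedegreenegative} then gives $[P, \iota(x)] = 0$. Since $P$ itself has charge degree $2 \geq 1$ and now annihilates every $\iota(x)$ under the bracket, a second application of \Cref{chargedegreenegative} yields $P = 0$. The main obstacle is conceptual rather than computational: one must recognize that the obstruction measured in (\ref{eq:whynotzero}) is precisely the failure of $\bar{\theta}^\beta$ to be a homomorphism, and that the ghost term $\sum_{i}\varepsilon(e_i^*)\epsilon(e_i)$ has been engineered, through the central bracket $[\epsilon(x),\epsilon(y)] = -\gamma^\beta(x,y)K$ feeding into \Cref{liealghom}, exactly so that this anomaly is absorbed.
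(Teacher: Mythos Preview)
Your proposal is correct and follows essentially the same route as the paper: verify $[\bar d^\beta,\iota(x)]=\bar\theta^\beta(x)$ by reducing to the extra ghost term, obtain uniqueness from \Cref{chargedegreenegative}, and then show $(\bar d^\beta)^2=0$ by computing $[[(\bar d^\beta)^2,\iota(x)],\iota(y)]=\bar\theta^\beta([x,y])-[\bar\theta^\beta(x),\bar\theta^\beta(y)]$ and applying \Cref{chargedegreenegative} twice. The only cosmetic difference is that where you invoke \Cref{liealghom} to conclude $\bar\theta^\beta$ is a homomorphism, the paper expands the bracket directly---but that expansion is precisely the content of \Cref{liealghom}, so the arguments are identical.
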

\begin{proof}
	By \Cref{dbeta}, we have $[d^\beta, \iota(x)]=\theta^\beta(x)$, so we only need to show that 
	\begin{align*}
	\sum_{i\in \Z}[\varepsilon(e_i^*) \epsilon(e_i), \iota(x)]=\epsilon(x).
	\end{align*}
	This is obvious for $x=e_k$ hence true for all $x\in L$. 
	The uniqueness is  by \Cref{chargedegreenegative}.

	The inner derivations $[(\bar{d}^\beta)^2, \cdot]$ and $[[(\bar{d}^\beta)^2, \iota(x)], \cdot]$ are  of  charge degree $2$ and degree $1$, respectively, if they are non-zero. By \Cref{uniquederivation}, they are completely determined by their value on $\iota(L)$. Recall the calculations in  (\ref{eq:whynotzero}). Since $[L, L]\subseteq \ker\,\gamma^\beta$ and  $[\bar{d}^\beta, \iota(x)]=\bar{\theta}^\beta(x)$, for $x, y\in L$, we have  
	\begin{align*}
	[[(\bar{d}^\beta)^2, \iota(x)], \iota(y)]
	&=\bar{\theta}^\beta([x, y])-[\bar{\theta}^\beta(x), \bar{\theta}^\beta(y)]\\
	&=\rho^\beta([x, y])+[x, y]-[\rho^\beta(x)+x+\epsilon(x), \rho^\beta(y)+y+\epsilon(y)]\\
	&=\rho^\beta([x, y])-[\rho^\beta(x), \rho^\beta(y)]+\gamma^\beta(x, y)\\
	&=0.
	\end{align*}
	\Cref{chargedegreenegative} then implies that  $[(\bar{d}^\beta)^2, \iota(x)]=0$ for all $x\in L$ hence $(\bar{d}^\beta)^2=0$.
\end{proof}

\begin{definition}
	We call the complex $(M\otimes \Lambda^{\infty/2+\bullet} L^*\otimes \mf F_{\beta}, \bar{d}^\beta)$ the {\em adjusted Feigin complex} with respect to $\beta$, and its cohomology $H_a^{\infty/2+\bullet}(L, \beta, M)$ the {\em adjusted semi-infinite cohomology} of $L$ with coefficients in $M$, with respect to $\beta$. 
\end{definition}
\begin{remark} 
	Note that we used a subscript ``$a$'' in the adjusted semi-infinite cohomology in contrast to ordinary semi-infinite cohomology. 	
\end{remark}

\subsection{Comparison with ordinary semi-infinite cohomology}
The adjustment sometimes gives nothing new but ordinary semi-infinite cohomology with coefficients in another module. Assume that $\rho^\beta$ gives a semi-infinite structure on $L$, and $\beta'\in \bigoplus_{n\geq 0}L_n^*$ 
such that $\partial \beta'\neq 0$ but $\partial \beta'([L, L], L)=0$, where $\partial \beta'(x, y)=\beta'([x, y])$. Then $\gamma^{\beta+\beta'}=-\partial \beta'\neq 0$ and $[L, L]\subseteq \ker\, \gamma^{\beta+\beta'}$.  We can therefore talk about the adjusted semi-infinite cohomology of $L$ with coefficients in a smooth module $M$,  with respect to $\beta+\beta'$, i.e., the cohomology of $(M\otimes\Lambda^{\infty/2+\bullet }\otimes \mf F_{\beta+\beta'}, \bar{d}^{\beta+\beta'})$.  Recall that \begin{align*}
\bar{d}^{\beta+\beta'}&=\sum_{i\in \Z}e_i\varepsilon
(e_i^*)-\dfrac{1}{2}\sum_{i, j\in \Z}:\iota([e_i, e_j])\varepsilon
(e_i^*)\varepsilon
(e_j^*):+\varepsilon(\beta+\beta')+\sum_{i\in \Z}\varepsilon(e_i^*) \epsilon(e_i)\nonumber\\
&=\sum_{i\in \Z}\varepsilon(e_i^*) (e_i+\beta'(e_i)+\epsilon(e_i))-\dfrac{1}{2}\sum_{i, j\in \Z}:\iota([e_i, e_j])\varepsilon
(e_i^*)\varepsilon
(e_j^*):+\varepsilon(\beta),
\end{align*}
and
\begin{align*}
[\bar{d}^{\beta+\beta'}, \iota(x)]=x+\beta'(x)+\epsilon(x)+\rho^\beta(x). 
\end{align*}
Since $[\epsilon(x), \epsilon(y)]=-\gamma^{\beta+\beta'}(x, y)=\beta'([x, y])$ and $\epsilon([x, y])\equiv 0$, we have
\begin{align*}
[x+\beta'(x)+\epsilon(x), y+\beta'(y)+\epsilon(y)]=[x, y]+\beta'([x, y]), 
\end{align*}
that is, $M\otimes \mf F_{\beta+\beta'}$ becomes an $L$-module under the action $x+\beta'(x)+\epsilon(x)$, and it is smooth.  Therefore, we have the following theorem. 
\begin{theorem}\label{prop:adjustedandordinary}
	Let $\beta, \beta'$ be as above. Then $$H_a^{\infty/2+\bullet}(L, \beta+\beta', M) \cong H^{\infty/2+\bullet}(L, \beta, M\otimes \mf F_{\beta+\beta'}).$$
\end{theorem}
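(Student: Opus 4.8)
The plan is to produce an isomorphism of complexes identifying the adjusted complex on the left with an ordinary Feigin standard complex on the right; the stated isomorphism of cohomologies is then immediate. The essential input, already recorded just before the statement, is that the assignment $x \mapsto x + \beta'(x) + \epsilon(x)$ turns $N := M \otimes \mf F_{\beta+\beta'}$ into a smooth $L$-module. Indeed, since $[L,L] \subseteq \ker\,\gamma^{\beta+\beta'}$ we have $\epsilon([x,y]) = 0$, while $[\epsilon(x), \epsilon(y)] = -\gamma^{\beta+\beta'}(x,y) = \beta'([x,y])$, so that $[x+\beta'(x)+\epsilon(x),\, y+\beta'(y)+\epsilon(y)] = [x,y] + \beta'([x,y])$ is exactly the operator attached to $[x,y]$; smoothness of $N$ follows from that of $M$ and of $\mf F_{\beta+\beta'}$. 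Because $\rho^\beta$ is assumed to give a genuine semi-infinite structure (i.e. $\gamma^\beta \equiv 0$), the ordinary differential $d^\beta$ of (\ref{dd}) with coefficient module $N$ is defined and squares to zero by \Cref{dproperties}, and by \Cref{def:ordinarySIC} its cohomology is $H^{\infty/2+\bullet}(L, \beta, N)$.

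Next I would identify the underlying charge-graded spaces via the reordering of tensor factors
$$M \otimes \Lambda^{\infty/2+\bullet}L^* \otimes \mf F_{\beta+\beta'} \cong (M \otimes \mf F_{\beta+\beta'}) \otimes \Lambda^{\infty/2+\bullet}L^* = N \otimes \Lambda^{\infty/2+\bullet}L^*.$$
This is an isomorphism of $\Z$-graded spaces since $\cdeg\, M = \cdeg\, \mf F_{\beta+\beta'} = 0$ and the charge grading is carried entirely by the factor $\Lambda^{\infty/2+\bullet}L^*$; moreover $M$ and $\mf F_{\beta+\beta'}$ are purely even, so commuting $\mf F_{\beta+\beta'}$ past the odd factor $\Lambda^{\infty/2+\bullet}L^*$ produces no signs.

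The heart of the argument is to check that this identification intertwines $\bar{d}^{\beta+\beta'}$ with $d^\beta$. Starting from the rewritten expression displayed before the statement,
$$\bar{d}^{\beta+\beta'} = \sum_{i\in\Z}\varepsilon(e_i^*)\bigl(e_i + \beta'(e_i) + \epsilon(e_i)\bigr) - \frac{1}{2}\sum_{i,j\in\Z}:\iota([e_i,e_j])\varepsilon(e_i^*)\varepsilon(e_j^*): + \varepsilon(\beta),$$
I would observe that $e_i + \beta'(e_i) + \epsilon(e_i)$ is precisely the action of $e_i$ on $N$, so the leading term is $\sum_i (e_i)_N\,\varepsilon(e_i^*)$, while the quadratic term and $\varepsilon(\beta)$ are literally those of $d^\beta$. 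Hence $\bar{d}^{\beta+\beta'}$ equals the differential $d^\beta$ of (\ref{dd}) computed with coefficient module $N$. As both operators have charge degree $1$ and agree on the identified spaces, the two complexes are isomorphic, giving
$$H_a^{\infty/2+\bullet}(L, \beta+\beta', M) \cong H^{\infty/2+\bullet}(L, \beta, N) = H^{\infty/2+\bullet}(L, \beta, M \otimes \mf F_{\beta+\beta'}).$$

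I expect the main obstacle to be the bookkeeping in this last step: one must confirm that the scalar $\beta'(e_i)$, the even operator $\epsilon(e_i)$ acting on the $\mf F_{\beta+\beta'}$-factor, and the genuine module action of $e_i$ on $M$ assemble into a single operator that commutes with the odd creation operator $\varepsilon(e_i^*)$ in the same way the coefficient action does in (\ref{dd}). Granting this---and it holds because $L$ and $\epsilon(F_{\beta+\beta'})$ both commute with $cl(L)$ inside $s(L)$---there is no further difficulty, since $(\bar{d}^{\beta+\beta'})^2 = 0$ by \Cref{maintheorem} transports to $(d^\beta)^2 = 0$ and the cohomologies coincide.
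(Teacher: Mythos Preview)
Your proposal is correct and follows exactly the approach taken in the paper: rewrite $\bar d^{\beta+\beta'}$ so that the coefficient operators assemble into the new $L$-action $x\mapsto x+\beta'(x)+\epsilon(x)$ on $M\otimes\mf F_{\beta+\beta'}$, then recognize the result as the ordinary differential $d^\beta$ for this smooth module. The paper records precisely these computations in the paragraph preceding the theorem and then states the result without further argument; you have simply made the identification of complexes and the charge-grading bookkeeping more explicit.
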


\medskip
\section{Affine W-algebras}\label{sec:3}
In the mathematical literature, affine W-algebras were introduced in the principal nilpotent case \cite{KacFrenkelWakimoto} about ten years earlier than in the general case \cite{Kac&Wakimoto2003}. One reason is that in the general case, we need to add more ``ghosts'' \cite{Bershadsky}. 
In this section, we will explain the definition of  affine $W$-algebras  associated to good $\Z$-gradings \cite{Kac&Wakimoto2003} in the language of adjusted semi-infinite cohomology and give them a uniform realization. We will also give a characterization of admissible pairs introduced in \cite{Sadaka} and define affine W-algebras associated to some special admissible pairs. 
\medskip 

\subsection{Affine W-algebras associated to good $\Z$-gradings}
Let $\mf g$ be a finite-dimensional semisimple Lie algebra with Killing form $(\cdot | \cdot)$, and $e\in \mf g$  a nonzero nilpotent element. Let $\Gamma: \mf g=\bigoplus_{i\in \Z}\mf g_i$ be a $\Z$-grading of $\mf g$. 
\begin{definition}
	The $\Z$-grading $\Gamma$ is called a \emph{good $\Z$-grading} with respect to $e$, if   $e\in \mf g_2$ and $\ad\,e:  \mf g_i\rightarrow \mf g_{i+2}$ is injective for $i\leq -1$ and surjective for $i\geq -1$.
\end{definition} For example, by Jacobson-Malecov's theorem, we can embed $e$ into an $\mf sl_2$-triple $\{e, f, h\}$ in $\mf g$.  
Then  $\mf g=\bigoplus_{i\in \Z}\mf g_i, \, \mbox{~where~} \mf g_i:=\{y\in \mf g~|~ [h, y]=iy\}$,  is a good $\Z$-grading with respect to $e$. Such $\Z$-gradings are called \emph{Dynkin gradings}. The $\Z$-grading  
$\Gamma$ is called an \emph{even grading} if $\mf g_i=0$ for all odd $i$.

There is a non-degenerate bilinear form on $\mf g_{-1}$ defined by  $\langle a, b\rangle:=(e~|~[b, a])$ for $a, b\in \mf g_{-1}$. Let $\mf l$ be an isotropic subspace of $\mf g_{-1}$, i.e., $\langle x, y\rangle= 0$ for all $x, y\in \mf l$,  and $\mf l^\perp:=\{x\in \mf g_{-1}~|~\langle x, \mf l\rangle=0 \}$ its orthogonal complement. We have $\mf l\subseteq \mf l^\perp$ and the bilinear form $\langle \cdot, \cdot \rangle$ is non-degenerate when restricted to $\mf l^\perp/ \mf l$. Let $\mf m=\bigoplus_{i\leq -2}\mf g_i$,    $\mf m_{\mf l}=\mf m\oplus \mf l$,  $\mf n_{\mf l}=\mf m\oplus \mf l^\perp$, and $\mf n=\mf m\oplus \mf g_{-1}$, which are all nilpotent subalgebras of $\mf g$. 

The {\em Kac-Moody affinization} of $\mf g$ is $\hat{\mf g}=\left(\mf g\otimes \C[t, t^{-1}]\right)\oplus \C K$  with Lie bracket: 
\begin{align*} 
[a\otimes t^m, b\otimes t^n]=[a, b]\otimes t^{m+n}+m\delta_{m, -n}(a~|~b)K,  \quad [K, \hat{\mf g}]=0. 
\end{align*}
Let $\hat{\mf g}_+=\left(\mf g\otimes \C [t]\right)\oplus \C K$, which is a subalgebra of $\hat{\mf g}$.  Define a 1-dimensional module $\C_{k}$ of $\hat{\mf g}_+$ on which $\mf g\otimes \C [t]$ acts as zero and $K$ acts as the constant $k$.  The \emph{vacuum representation} of level $k$ is the induced  module \begin{align*}\label{def:vacuum}
V_k(\mf g):=\Ind_{\hat{\mf g}_+}^{\hat{\mf g}} \, \C_{k}.
\end{align*}
It is a smooth $\hat{\mf g}$-module, and there is a vertex algebra structure on $V_k(\mf g)$.

Choose a basis $\{u_\alpha\}_{\alpha\in S_j}$ of each $\mf g_j$, and a dual basis $\{u_\alpha^*\}_{\alpha\in S_j}$ for $\mf g_j^*$. Let $S_{\mf l^\perp }\subseteq S_{-1}$ such that $\{u_\alpha\}_{\alpha\in S_{\mf l^\perp }}$ is a basis of $\mf l^\perp$. Let $S_{\mf n_{\mf l}}=\bigcup_{j<-1}S_j\cup S_{\mf l^\perp}$.
Let  $\hat{\mf n}_{\mf l}=\mf n_{\mf l} \otimes \C[t, t^{-1}]$  be the affinization of $\mf n_{\mf l}$, and $\hat{\mf n}_{\mf l}^*:=\mf n_{\mf l}^*\otimes \C[t, t^{-1}]$. 
Denote by $u_{i, n}=u_i\otimes t^n$ and $u_{i, n}^{*}=u_i^*\otimes t^{n}$. Then $\{u_{i, n} \}_{n\in \Z, i\in S_{\mf n_{\mf l}}}$ and $\{u_{i, n}^{*} \}_{n\in \Z, i\in S_{\mf n_{\mf l}}}$ form  bases of $\hat{\mf n}_{\mf l}$ and $\hat{\mf n}^*_{\mf l}$, respectively. One can identify $\hat{\mf n}^*_{\mf l}$ with the restricted dual of $\hat{\mf n}_{\mf l}$ under the paring $\langle u_{i, n}, u_{j, m}^{*}\rangle:=\delta_{m, -n-1}\delta_{i, j}$. Note that we have a shift of index. 
As $\mf n_{\mf l}$ is nilpotent, it admits a semi-infinite structure.  Let $\beta_e\in \hat{\mf n}^*_{\mf l}$ be defined by $\beta_e(u\otimes t^n):=\delta_{n,-1}(e~|~u)$ for $u\in \mf n_{\mf l}$.

Let
\begin{align}
\rho^{\beta_e}(x)=\sum_{i\in S_{\mf n_{\mf l}}, n\in \Z}:\iota(\ad\, x(u_{i, n}))\varepsilon(u_{i, -n-1}^{*}):+\beta_e(x).
\end{align}  Then for $x, y\in \hat{\mf n}_{\mf l}$, $$\gamma^{\beta_e}(x, y):=[\rho^{\beta_e}(x), \rho^{\beta_e}(y)]-\rho^{\beta_e}([x, y])=-{\beta_e}([x, y]).$$ 
In particular, $\rho^{\beta_e}(x)$ gives a semi-infinite structure on $\hat{\mf n}_{\mf l}$ if and only if $\beta_e([x, y])= 0$ for all $x, y\in \mf n_{\mf l}$, which is true if and only if the $\Z$-grading  $\Gamma$ is even. 

Let $\hat{\mf m}_{\mf l}=\mf m_{\mf l}\otimes \C[t, t^{-1}]$. Note that $\ker\,\gamma^{\beta_e}=\hat{\mf m}_{\mf l}$. Moreover, we have $[\hat{\mf n}_{\mf l}, \hat{\mf n}_{\mf l}]\subseteq \hat{\mf m}_{\mf l}$, hence the assumption after \Cref{liealghom} with respect to the 1-form $\beta_e$ is satisfied for $\hat{\mf n}_{\mf l}$. We can consider the adjusted semi-infinite cohomology of $\hat{\mf n}_{\mf l}$ with coefficients in the smooth module $V_k(\mf g)$, with respect to $\beta_e$.
\begin{definition}[\cite{Kac&Wakimoto2003}]
	The affine W-algebra $W^k(\mf g, e)$ associated to the data $(\mf g, e, k)$ is the adjusted semi-infinite cohomology  $H_a^{\infty/2+\bullet}(\hat{\mf n}_{\mf l}, \beta_e, V_k(\mf g))$. 
\end{definition}
\begin{remark} In \cite{Kac&Wakimoto2003}, the authors set $\mf l=0$. In that case, the adjusted Feigin complex is $V_k(\mf g)\otimes \Lambda^{\infty/2+\bullet}\hat{\mf n}\otimes \mf F_{\beta_e}$. One can show that $\Lambda^{\infty/2+\bullet}\hat{\mf n}$  and $\mf F_{\beta_e}$ correspond to $F(A_{ch})$ and   to $F(A_{ne})$, respectively,  as in \cite{Kac&Wakimoto2003}, hence the complexes coincide. 
	A detailed calculation showing that the differentials also coincide and hence the two  definitions are equivalent can be found in \cite{XH}.
\end{remark}

By \Cref{prop:adjustedandordinary}, we have the following uniform definition of affine W-algebras.
\begin{theorem}\label{SecondMainTheorem}  Let $\mf g$ be a  finite-dimensional semisimple Lie algebra and $e\in \mf g$  a general non-zero nilpotent element. Then
	$$W^k(\mf g, e)\cong H^{\infty/2+\bullet}(\hat{\mf n}_{\mf l}, V_k(\mf g)\otimes\mf F_{\beta_e}).$$
\end{theorem}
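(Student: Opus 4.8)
The plan is to obtain \Cref{SecondMainTheorem} as a direct specialization of \Cref{prop:adjustedandordinary}. Concretely, I would take $L=\hat{\mf n}_{\mf l}$, $M=V_k(\mf g)$, and the splitting $\beta=0$, $\beta'=\beta_e$, so that $\beta+\beta'=\beta_e$. With these choices the left-hand side of \Cref{prop:adjustedandordinary} is $H_a^{\infty/2+\bullet}(\hat{\mf n}_{\mf l},\beta_e,V_k(\mf g))$, which is by definition $W^k(\mf g,e)$, while the right-hand side, now carrying $\beta=0$, is $H^{\infty/2+\bullet}(\hat{\mf n}_{\mf l},V_k(\mf g)\otimes\mf F_{\beta_e})$. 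Hence the entire task reduces to checking that the pair $(0,\beta_e)$ meets the hypotheses imposed just before \Cref{prop:adjustedandordinary}.

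First I would confirm that $\rho^0$ defines a semi-infinite structure on $\hat{\mf n}_{\mf l}$; this is exactly \Cref{nilpotentsemistructure}, since $\mf n_{\mf l}$ is a nilpotent subalgebra of $\mf g$. Next I would place $\beta_e$ in the grading: because $\beta_e(u\otimes t^n)=\delta_{n,-1}(e\,|\,u)$ is supported on $\mf g_{-2}\otimes t^{-1}\subseteq(\hat{\mf n}_{\mf l})_{-1}$, it lies in $(\hat{\mf n}^*_{\mf l})_1$, so $\beta_e\in\bigoplus_{n\geq 0}(\hat{\mf n}^*_{\mf l})_n$ as required (here only the $t$-grading matters, not the internal grading of $\mf g$). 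The condition $\partial\beta_e([L,L],L)=0$ is then immediate from facts already recorded in this subsection: we have $[\hat{\mf n}_{\mf l},\hat{\mf n}_{\mf l}]\subseteq\hat{\mf m}_{\mf l}=\ker\gamma^{\beta_e}$ and $\gamma^{\beta_e}(x,y)=-\beta_e([x,y])$, so for $x\in[L,L]\subseteq\ker\gamma^{\beta_e}$ and any $y\in L$ one gets $\partial\beta_e(x,y)=\beta_e([x,y])=-\gamma^{\beta_e}(x,y)=0$. Once these are in place, \Cref{prop:adjustedandordinary} applies verbatim and delivers the stated isomorphism.

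The hard part will be the remaining hypothesis $\partial\beta_e\neq 0$, which, as noted earlier in the subsection, holds precisely when the good $\Z$-grading $\Gamma$ is not even; this is exactly the case in which the new ghosts are genuinely needed and in which the argument above runs without change. When $\Gamma$ is even one has $\gamma^{\beta_e}\equiv 0$, so $F_{\beta_e}=0$, the Fock module collapses to $\mf F_{\beta_e}=\C$, and the adjusted complex is just the ordinary Feigin complex for $\beta_e$; the theorem then asserts $H^{\infty/2+\bullet}(\hat{\mf n}_{\mf l},\beta_e,V_k(\mf g))\cong H^{\infty/2+\bullet}(\hat{\mf n}_{\mf l},0,V_k(\mf g))$. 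I expect reconciling this even case to be the main obstacle. Since both $\rho^0$ and $\rho^{\beta_e}$ are then semi-infinite structures, \cite{A.Voronov1993}, Proposition 2.7 reduces the claim to showing that twisting $V_k(\mf g)$ by the one-dimensional character $\C_{\beta_e}$ leaves the cohomology unchanged, which I would verify by hand from the explicit structure of the vacuum module rather than by a formal argument.
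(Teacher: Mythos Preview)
Your approach is the same as the paper's: the theorem is meant to follow directly from \Cref{prop:adjustedandordinary} with $L=\hat{\mf n}_{\mf l}$, $M=V_k(\mf g)$, $\beta=0$, $\beta'=\beta_e$, and your verification of the hypotheses in the non-even case is correct.

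The gap is in your handling of the even case. You write that $\mf F_{\beta_e}$ collapses to the trivial one-dimensional module, so that the right-hand side becomes $H^{\infty/2+\bullet}(\hat{\mf n}_{\mf l},0,V_k(\mf g))$; you then propose to prove that twisting by $\C_{\beta_e}$ leaves the cohomology unchanged. That last claim is false in general---the twist by $\beta_e$ is precisely what produces the Drinfeld--Sokolov reduction rather than bare semi-infinite cohomology---so your proposed hand verification would fail. The mistake is in the $L$-module structure on $\mf F_{\beta_e}$: by the discussion preceding \Cref{prop:adjustedandordinary}, the action of $x\in L$ on $M\otimes\mf F_{\beta+\beta'}$ is $x+\beta'(x)+\epsilon(x)$, so when $F_{\beta_e}=0$ the Fock module is $\C$ with $x$ acting as $\beta_e(x)$, i.e.\ $\mf F_{\beta_e}\cong\C_{\beta_e}$ (the paper states this explicitly in Remark~(1) following the theorem). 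Once this is corrected, the even case reads
\[
H^{\infty/2+\bullet}(\hat{\mf n}_{\mf l},\beta_e,V_k(\mf g))\;\cong\;H^{\infty/2+\bullet}(\hat{\mf n}_{\mf l},0,V_k(\mf g)\otimes\C_{\beta_e}),
\]
which is exactly Voronov's Proposition~2.7 (recorded in the paper just before \S\ref{sec:2}). No further computation is required.
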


\begin{remark}\leavevmode
	\begin{itemize}
		\item[(1)] When the $\Z$-grading  is even, i.e., $\mf n=\mf m$,  $\rho^{\beta_e}$ gives a semi-infinite structure on $\mf n$, and  the Fock module $\mf F_{\beta_e}$ reduces to a $1$-dimensional module on which $x\in \hat{\mf n}$ acts as $\beta_e(x)$. This recovers the semi-infinite cohomology realization of affine W-algebra in the principal nilpotent case \cite{KacFrenkelWakimoto}. 
	\end{itemize}
\begin{itemize}
\item[(2)]
		The realization of $W^k(\mf g, e)$ through $ H^{\infty/2+\bullet}(\hat{\mf n}_{\mf l}, V_k(\mf g)\otimes\mf F_{\beta_e})$ was also observed (Remark 3.6.1) in \cite{Arakawa3}, though the construction there is a bit different from ours. 
	\end{itemize}
\end{remark}

\subsection{Characterization of an admissible pair}
In the paper \cite{Sadaka}, the author introduced the notion of an admissible pair with respect to a nilpotent element and defined finite W-algebras associated to admissible pairs. We are going to give a characterization of admissible pairs and define affine W-algebras associated to some special admissible pairs using adjustedMaciej Zakarczemny semi-infinite cohomology. 

Let $\mf g$ be a finite-dimensional semisimple Lie algebra with Killing form $(\cdot | \cdot)$, and $e\in \mf g$  a nonzero nilpotent element. Let $\Gamma: \mf g=\bigoplus_{i\in \Z}\mf g_i$ be a $\Z$-grading of $\mf g$. 
\begin{definition}\label{Def:AdPare}
A pair $(\mf m, \mf n)$  of graded  subalgebras (with respect to $\Gamma$) is called an \emph{admissible pair} with respect to $e$ if there exists an integer $a>1$, such that 
	\begin{itemize}
		\item[(i)] $e\in \mf g_a$; 
		\item[(ii)] $\bigoplus_{i\leq -a}\mf g_i\subseteq \mf m\subseteq \mf n\subseteq \bigoplus_{i<0}\mf g_i$; 
		\item[(iii)] $\mf m^\perp \cap [\mf g, e]=[\mf n, e]$, where $\mf m^\perp:=\{x\in \mf g~|~ (\mf m~|~ x)=0 \}$;
		\item[(iv)] $\ad\, e: \mf n\rightarrow [\mf n, e]$ is injective;
		\item[(v)] $[\mf m, \mf n]\subseteq \mf m$;
		\item[(vi)] $\dim\,\mf m+\dim\,\mf n=\dim\, [\mf g, e]$. 
	\end{itemize}
\end{definition}
Given a good $\Z$-grading $\mf g=\bigoplus_{i\in \Z}\mf g_i$ with respect to $e$, and an isotropic subspace $\mf l$ of $\mf g_{-1}$, the pair $(\mf m_{\mf l}, \mf n_{\mf l})$ defined as in the previous section is an admissible pair with $a=2$. In \cite{Sadaka}, some admissible pairs were shown not to be induced from a good $\Z$-grading. So an admissible pair is a more general notion. Finite W-algebras associated to admissible pairs were introduced in \cite{Sadaka}, and they were proved to be isomorphic to finite W-algebras associated to good $\Z$-gradings in some cases. It was also conjectured that this is true in general \cite{Sadaka}. 

Here we give another characterization of admissible pairs and introduce affine W-algebras associated to some special admissible pairs. 

Assume that $(\mf m, \mf n)$ is an admissible pair with respect to $e$ and the $\Z$-grading $\Gamma: \mf g=\bigoplus_{i\in \Z}\mf g_i$, where $e\in \mf g_a$ for some integer $a>1$. Let $$\mf g_{\leq -a}:=\bigoplus_{i\leq -a}\mf g_i, \quad \mf g_{<0}:=\bigoplus_{i<0}\mf g_i, \quad \mf g_{(-a,0)}:=\bigoplus_{-a<i<0}\mf g_i, \quad \mf g_{(0,a)}:=\bigoplus_{0<i<a}\mf g_i.$$ Let $\mf g_{<0}^e:=\{x\in \mf g_{<0}~|~[e, x]=0\}$. Then $\mf g_{<0}^e$ is a graded subspace of $\mf g_{(-a,0)}$ by Condition (iv) in \Cref{Def:AdPare}. Let $\mf n_{>-a}=\mf n\cap \mf g_{(-a,0)}$ and $\mf m_{>-a}=\mf m\cap \mf g_{(-a,0)}$.  
Since $\mf n_{>-a}\cap \mf g^e_{<0}=\{0\}$, one can extend $\mf n_{>-a}$ to a graded complement of $\mf g_{<0}^e$ in $\mf g_{(-a,0)}$, which we denote by $\mf g^{>-a}$ (this notation might be confusing, so we use superscript instead of subscript). Note that  we have $\mf g_{(-a, 0)}=\mf g_{<0}^e\oplus \mf g^{>-a}$ and $\mf m_{>-a}\subseteq \mf n_{>-a}\subseteq \mf g^{>-a}$. Obviously,  $\ad\,e$ is injective on $\mf g^{>-a}$. Let $ \mf g^{>-a}_i= \mf g^{>-a}\cap \mf g_i$.

Define a skew-symmetric bilinear form on $\mf g^{>-a}$ by 
\begin{align}\label{Def:BFong}
\langle x, y\rangle :=(e~|~[y, x]).
\end{align}
\begin{lemma}\label{lem:nondegenerate}
	The bilinear form on $\mf g^{>-a}$  defined by (\ref{Def:BFong}) is non-degenerate, and $\mf g^{>-a}$ is symmetric with respect to $-a/2$ in the sense that there is a non-degenerate pairing between $\mf g^{>-a}_{-a/2+i}$ and $\mf g^{>-a}_{-a/2-i}$, for each $i\in a/2+ \Z$ and $-a/2<i<a/2$.    
\end{lemma}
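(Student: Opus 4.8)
The plan is to reduce both assertions to two standard facts about the Killing form together with the injectivity of $\ad\, e$ on $\mf g^{>-a}$. First I would record that any $\Z$-grading of a semisimple $\mf g$ is realized by an $\ad$-semisimple grading element $h$ with $\mf g_i=\{x\mid [h,x]=ix\}$; invariance of $(\cdot\,|\,\cdot)$ then forces $(\mf g_i|\mf g_j)=0$ unless $i+j=0$, so the Killing form restricts to a \emph{perfect} pairing $\mf g_{(0,a)}\times \mf g_{(-a,0)}\to \C$. Second, since $\ad\, e$ is skew-adjoint for $(\cdot\,|\,\cdot)$, we have $(\ad\, e(x)|u)=-(x|\ad\, e(u))$, so $\im(\ad\, e)$ is Killing-orthogonal to $\ker(\ad\, e)$; in particular every $\ad\, e(x)$ is orthogonal to $\mf g_{<0}^e=\ker(\ad\,e)\cap \mf g_{<0}$, since $\mf g_{<0}^e\subseteq \ker(\ad\, e)$.

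Next I would rewrite the form. Using invariance, $\langle x,y\rangle=(e|[y,x])=-(\ad\, e(x)|y)$ for $x,y\in \mf g^{>-a}$. Because $x\in \mf g_p$ with $-a<p<0$ sends $\ad\, e(x)$ into $\mf g_{p+a}\subseteq \mf g_{(0,a)}$, while $y$ lies in $\mf g_{(-a,0)}$, the grading immediately yields $\langle \mf g^{>-a}_p,\mf g^{>-a}_q\rangle=0$ unless $p+q=-a$; writing $p=-a/2+i$ (so $-a-p=-a/2-i$, which again lies in $(-a,0)$) this is precisely the statement that the form only pairs $\mf g^{>-a}_{-a/2+i}$ with $\mf g^{>-a}_{-a/2-i}$.

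The non-degeneracy is the heart of the matter, and here I would flag the obstruction and how to get past it. If one tries to copy the good-grading argument verbatim — $x$ in the radical $\Rightarrow \ad\,e(x)\perp \mf g^{>-a}\Rightarrow \ad\,e(x)=0\Rightarrow x=0$ — the middle implication \emph{fails}, because $\mf g^{>-a}$ is only a proper summand of $\mf g_{(-a,0)}$, the missing direction being $\mf g_{<0}^e$; orthogonality to $\mf g^{>-a}$ alone is too weak to conclude vanishing against the perfect pairing. The key point that repairs this is the second preliminary fact: since $\mf g_{<0}^e\subseteq \ker(\ad\, e)$, the vector $\ad\,e(x)$ is \emph{automatically} orthogonal to the complementary summand $\mf g_{<0}^e$. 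Thus if $x\in \mf g^{>-a}$ lies in the radical, then $\ad\,e(x)\perp \mf g^{>-a}$ and $\ad\,e(x)\perp \mf g_{<0}^e$, whence $\ad\,e(x)\perp \mf g^{>-a}\oplus \mf g_{<0}^e=\mf g_{(-a,0)}$. Since $\ad\,e(x)\in \mf g_{(0,a)}$ and that space pairs perfectly with $\mf g_{(-a,0)}$, we get $\ad\,e(x)=0$, and injectivity of $\ad\,e$ on $\mf g^{>-a}$ gives $x=0$. Hence the form is non-degenerate. The only genuine difficulty of the whole lemma is exactly this kernel direction $\mf g_{<0}^e$, and it is dissolved by the $\im\perp\ker$ observation.

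Finally, the symmetry falls out formally. The form is skew-symmetric and, by the grading computation above, block anti-diagonal with respect to the decomposition $\mf g^{>-a}=\bigoplus_p \mf g^{>-a}_p$, so a nonzero left-kernel vector of the block pairing $\mf g^{>-a}_{-a/2+i}\times \mf g^{>-a}_{-a/2-i}$ would be a radical vector of the whole form. Non-degeneracy therefore forces each such block pairing to be perfect, giving both $\dim \mf g^{>-a}_{-a/2+i}=\dim \mf g^{>-a}_{-a/2-i}$ and the claimed non-degenerate pairing; the constraints $i\in a/2+\Z$ and $-a/2<i<a/2$ are simply the requirement that $-a/2\pm i$ be actual integer grading degrees lying in $(-a,0)$. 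Everything outside the non-degeneracy step is bookkeeping with the grading.
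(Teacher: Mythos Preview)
Your proof is correct and follows essentially the same approach as the paper's: both hinge on the observation that $\ad\, e(x)$ is automatically Killing-orthogonal to $\mf g_{<0}^e$ (the paper writes this as $([e,x]\,|\,z)=(x\,|\,[z,e])=0$ for $z\in\mf g_{<0}^e$), which repairs the passage from ``orthogonal to $\mf g^{>-a}$'' to ``orthogonal to all of $\mf g_{(-a,0)}$'', after which the perfect Killing pairing between $\mf g_{(0,a)}$ and $\mf g_{(-a,0)}$ and injectivity of $\ad\, e$ finish the job. The only cosmetic difference is that the paper argues existentially (given nonzero $x$, produce $y$ with $\langle y,x\rangle\neq 0$) while you argue by contrapositive, and the paper treats the graded symmetry directly by homogeneous components whereas you deduce it from the block anti-diagonal structure plus global non-degeneracy; neither difference is substantive.
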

\begin{proof}
	Since $(\cdot~|~\cdot)$ is non-degenerate and invariant on $\mf g$, it restricts to a non-degenerate pairing between $\mf g_{(-a, 0)}$	and $\mf g_{(0, a)}$. Given $x\in \mf g^{>-a}$, since $\ad\, e$ is injective on $\mf g^{>-a}$,  $[e, x]\in \mf g_{(0, a)}$ is nonzero, so there exists some $y'\in \mf g_{(-a, 0)}$ such that $([e, x]~|~y')\neq 0$. Let $y'=y+z$ with $z\in \mf g^e_{<0}$ and $y\in \mf g^{>-a}$, then since $([e, x]~|~z)=(x~|~[z, e])=0$, we have $\langle y, x \rangle=([e, x]~|~y)\neq 0$. 
	
	We now check the symmetry with respect to $-a/2$. Let $x\in \mf g^{>-a}_{-a/2-i}$, and assume that $y\in \mf  g^{>-a}$ such that $([e, x]~|~y)\neq 0$. Let $y=\sum_i y_i$ with $y_i\in \mf g_i$. Then since $[e, x]\in \mf g_{a/2-i}$, we have $([e, x]~|~y_i)=0$ for all $i\neq -a/2+i$ and  hence $([e, x]~|~y_{-a/2+i})\neq 0$. As $\mf g^{>-a}$ is a graded subspace, we have $y_{-a/2+i}\in \mf g^{>-a}$.

\end{proof}
\begin{lemma}\label{Cor:dimequality}
	We have $\dim \mf g^{>-a}+2\dim\,\mf g_{\leq -a}=\dim [\mf g, e]$. 
\end{lemma}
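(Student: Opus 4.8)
The plan is to prove the identity by decomposing $[\mf g, e]$ along the grading and extracting a duality from the Killing form. Since $\ad\, e\colon \mf g_i\to \mf g_{i+a}$ raises degree by $a$, the image $[\mf g, e]$ is a graded subspace with degree-$(i+a)$ component $[\mf g_i, e]$, so that
\begin{align*}
\dim[\mf g, e]=\sum_{i\in \Z}r_i, \qquad r_i:=\dim[\mf g_i, e]=\dim \mf g_i-\dim \mf g_i^e,
\end{align*}
where $\mf g_i^e:=\{x\in \mf g_i~|~[e, x]=0\}$ is the degree-$i$ part of the centralizer of $e$. The whole computation then reduces to understanding the integers $r_i$.

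The key step I would establish first is the symmetry $r_i=r_{-i-a}$. Invariance of the Killing form gives $([e, x]~|~y)=-(x~|~[e, y])$, so under the non-degenerate pairing $\mf g_{i+a}\times \mf g_{-i-a}\to \C$ the annihilator of $[\mf g_i, e]$ is precisely $\mf g_{-i-a}^e$; hence $r_i=\dim \mf g_{-i-a}-\dim \mf g_{-i-a}^e=r_{-i-a}$. Note that $i\mapsto -i-a$ is an involution carrying $\{i\leq -a\}$ bijectively onto $\{i\geq 0\}$, so in particular $\sum_{i\leq -a}r_i=\sum_{i\geq 0}r_i$. This is the only place where semisimplicity is really used (through the non-degeneracy and invariance of $(\cdot~|~\cdot)$), and it is the heart of the argument.

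Next I would read off the two quantities in the statement. Since $\mf g_{\leq -a}\subseteq \mf m\subseteq \mf n$ by Condition (ii) and $\ad\, e$ is injective on $\mf n$ by Condition (iv), we get $\mf g_i^e=0$ for all $i\leq -a$; this both justifies the earlier assertion that $\mf g_{<0}^e\subseteq \mf g_{(-a, 0)}$ and shows $r_i=\dim \mf g_i$ for $i\leq -a$, whence $2\dim \mf g_{\leq -a}=2\sum_{i\leq -a}r_i$. From the graded direct sum $\mf g_{(-a, 0)}=\mf g_{<0}^e\oplus \mf g^{>-a}$ I obtain $\dim \mf g^{>-a}=\sum_{-a<i<0}(\dim \mf g_i-\dim \mf g_i^e)=\sum_{-a<i<0}r_i$.

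Finally I would assemble the pieces. Splitting $\Z$ into $\{i\leq -a\}$, $\{-a<i<0\}$ and $\{i\geq 0\}$ yields
\begin{align*}
\dim[\mf g, e]=\sum_{i\leq -a}r_i+\sum_{-a<i<0}r_i+\sum_{i\geq 0}r_i,
\end{align*}
while
\begin{align*}
\dim \mf g^{>-a}+2\dim \mf g_{\leq -a}=\sum_{-a<i<0}r_i+2\sum_{i\leq -a}r_i.
\end{align*}
These two expressions agree exactly when $\sum_{i\geq 0}r_i=\sum_{i\leq -a}r_i$, which is the symmetry established above. The only genuine obstacle is the duality $r_i=r_{-i-a}$; once that is in hand, the statement is pure bookkeeping with the grading together with Conditions (ii) and (iv).
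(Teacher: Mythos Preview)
Your argument is correct and follows essentially the same strategy as the paper: decompose $[\mf g,e]$ along the grading, use Conditions~(ii) and~(iv) to handle the range $i\leq -a$, and invoke Killing-form duality to match the piece coming from $i\geq 0$ with the one from $i\leq -a$. The packaging differs in one point worth noting. The paper works on the \emph{target} side, splitting $[\mf g,e]$ into its parts of degree $\leq 0$, in $(0,a)$, and $\geq a$, and then quotes from \cite{Sadaka} the fact $\mf g_{\geq a}\subseteq [\mf g,e]$ to get $\dim[\mf g,e]_{\geq a}=\dim\mf g_{\geq a}=\dim\mf g_{\leq -a}$. You instead work on the \emph{source} side and prove the rank symmetry $r_i=r_{-i-a}$ directly from invariance and non-degeneracy of $(\cdot\,|\,\cdot)$; combined with $\mf g_{-i-a}^e=0$ for $i\geq 0$ this yields $\sum_{i\geq 0}r_i=\sum_{i\leq -a}r_i$, which is exactly the same equality. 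Your route is thus self-contained (it in fact reproves the cited inclusion $\mf g_{\geq a}\subseteq[\mf g,e]$), at the cost of a small extra computation; the paper's route is shorter but relies on the external reference.
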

\begin{proof} Note that $[\mf g, e]$ is a graded subspace of $\mf g$. Let $\mf g_{\geq a}:=\bigoplus_{i\geq a}\mf g_i$.  Denote by $[\mf g, e]_{\leq 0}=[\mf g, e]\cap \mf g_{\leq 0}$,  $[\mf g, e]_{(0, a)}=[\mf g, e]\cap\mf g_{(0, a)}$ and $[\mf g, e]_{\geq a}=[\mf g, e]\cap \mf g_{\geq a}$. It was proved in \cite{Sadaka} that $\mf g_{\geq a}\subseteq [\mf g, e]$, so $\dim\, [\mf g, e]_{\geq a}=\dim\, \mf g_{\geq a}$. Since $\ad\, e$ is injective on $\mf g_{\leq -a}$ and $[\mf g, e]_{\leq 0}=[\bigoplus_{i\leq -a}\mf g_i, e]$, we have $\dim \,[\mf g, e]_{\leq 0}=\dim \,\mf g_{\leq -a}.$ Finally, we have $$[\mf g, e]_{(0, a)}=[\mf g_{(-a, 0)}, e]=[\mf g_{<0}^e\oplus \mf g^{>-a}, e]=[\mf g^{>-a}, e].$$ Since $\ad\, e$ is injective on $\mf g^{>-a}$, we have $\dim [\mf g, e]_{(0, a)}=\dim\,\mf g^{>-a}$. The non-degenerate pairing between $\mf g_{\leq -a}$ and $\mf g_{\geq a}$ ensures that $\dim\,\mf g_{\leq -a}=\dim\,\mf g_{\geq a}$. Now the equality of dimensions is clear. 
\end{proof}
\begin{cor}
	The subspace $\mf m_{>-a}$ is isotropic and $\mf n_{>-a}$ is coisotropic in $\mf g^{>-a}$ with respect to (\ref{Def:BFong}). Moreover, $\mf n_{>-a}$ is exactly the orthogonal complement of $\mf m_{>-a}$, i.e., $\mf n_{>-a}=\mf m_{>-a}^\perp:=\{x\in \mf g^{>-a}~|~ (e~|~[x, \mf m_{>-a}])=0\}$ .
\end{cor}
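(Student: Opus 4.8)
The plan is to deduce all three assertions from the single equality $\mf n_{>-a}=\mf m_{>-a}^\perp$; once this is in hand, isotropy and coisotropy are formal consequences of the non-degeneracy established in \Cref{lem:nondegenerate}. Indeed, the inclusion $\mf m_{>-a}\subseteq\mf n_{>-a}$ holds by construction, so $\mf n_{>-a}=\mf m_{>-a}^\perp$ immediately gives $\mf m_{>-a}\subseteq\mf m_{>-a}^\perp$, which is exactly isotropy of $\mf m_{>-a}$. Taking orthogonal complements and using that non-degeneracy forces $(W^\perp)^\perp=W$ in finite dimensions, we get $\mf n_{>-a}^\perp=(\mf m_{>-a}^\perp)^\perp=\mf m_{>-a}\subseteq\mf n_{>-a}$, which is coisotropy of $\mf n_{>-a}$. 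So everything reduces to the one equality, which I will prove by one inclusion together with a dimension count.

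For the dimension count, recall that $\mf m$ and $\mf n$ are graded and satisfy $\mf g_{\leq -a}\subseteq\mf m\subseteq\mf n\subseteq\mf g_{<0}$ by Condition (ii) of \Cref{Def:AdPare}. Since $\mf g_{<0}=\mf g_{\leq -a}\oplus\mf g_{(-a,0)}$ and $\mf g_{\leq -a}\subseteq\mf m$, the graded decompositions read $\mf m=\mf g_{\leq -a}\oplus\mf m_{>-a}$ and $\mf n=\mf g_{\leq -a}\oplus\mf n_{>-a}$. Feeding these into Condition (vi) turns it into
\[
2\dim\mf g_{\leq -a}+\dim\mf m_{>-a}+\dim\mf n_{>-a}=\dim[\mf g,e].
\]
Comparing with \Cref{Cor:dimequality}, namely $\dim\mf g^{>-a}+2\dim\mf g_{\leq -a}=\dim[\mf g,e]$, yields $\dim\mf m_{>-a}+\dim\mf n_{>-a}=\dim\mf g^{>-a}$. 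Because the form on $\mf g^{>-a}$ is non-degenerate, $\dim\mf m_{>-a}^\perp=\dim\mf g^{>-a}-\dim\mf m_{>-a}=\dim\mf n_{>-a}$, so it will suffice to prove a single inclusion.

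For the inclusion $\mf n_{>-a}\subseteq\mf m_{>-a}^\perp$, take $n\in\mf n_{>-a}$ and $m\in\mf m_{>-a}$; by the definition of $\mf m_{>-a}^\perp$ I must show $(e~|~[n,m])=0$. Invariance of the Killing form gives $(e~|~[n,m])=([e,n]~|~m)$, and $[e,n]\in[\mf n,e]$. By Condition (iii) we have $[\mf n,e]=\mf m^\perp\cap[\mf g,e]\subseteq\mf m^\perp$, hence $([e,n]~|~m)=(m~|~[e,n])=0$ because $m\in\mf m$. This establishes $\mf n_{>-a}\subseteq\mf m_{>-a}^\perp$, and combined with the dimension count it forces the equality $\mf n_{>-a}=\mf m_{>-a}^\perp$.

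The bulk of the work is bookkeeping: translating the abstract admissibility data into the symplectic picture on $\mf g^{>-a}$ and matching dimensions via \Cref{Cor:dimequality}. The one genuinely non-formal point is the orthogonality step, where Condition (iii) is precisely what is needed to see $[\mf n,e]\perp\mf m$ through the invariance of the Killing form; after that, the dimension identity and the non-degeneracy from \Cref{lem:nondegenerate} do the rest, and isotropy and coisotropy fall out automatically.
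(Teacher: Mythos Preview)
Your proof is correct and follows essentially the same route as the paper: establish the inclusion $\mf n_{>-a}\subseteq\mf m_{>-a}^\perp$ via orthogonality of $[\mf n,e]$ and $\mf m$, then match dimensions using Condition (vi) and \Cref{Cor:dimequality} together with non-degeneracy from \Cref{lem:nondegenerate}. The only cosmetic difference is ordering: the paper derives isotropy of $\mf m_{>-a}$ and the inclusion simultaneously from $(e\,|\,[\mf m,\mf n])=0$ before the dimension count, while you deduce isotropy and coisotropy after establishing the equality; your explicit invocation of Condition (iii) for the orthogonality step is in fact cleaner than the paper's reference to Condition (i), which by itself does not suffice.
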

\begin{proof}
Condition (i) in \Cref{Def:AdPare} implies that $(e~|~[\mf m, \mf n])=0$. In particular, $(e~|~[\mf m_{>-a}, \mf n_{>-a}])=0$, hence $\mf m_{>-a}$ is isotropic and $\mf n_{>-a}$ is contained in $\mf m_{>-a}^\perp$. Condition (vi) in \Cref{Def:AdPare} and \Cref{Cor:dimequality} then imply that they must be equal since $\dim\, \mf m_{>-a}+\dim\,\mf m_{>-a}^\perp=\dim\, \mf m_{>-a}+\dim\,\mf n_{>-a}= \dim \, \mf g^{>-a}$. 
\end{proof}

It was proved in \cite{Sadaka} that a  $\Z$-grading $\Gamma: \mf g=\bigoplus_{i\in \Z}\mf g_i$ admits an admissible pair for $e$ if and only if \begin{center} 
	($\ast$): \qquad	$e\in \mf g_a$ for some integer $a>1$ and $\ad\,e$ is injective on $\bigoplus_{i\leq -a}\mf g_i$. \end{center}
It is now very easy to prove this statement.  Namely, let $\mf g^{>-a}$ be a graded complement of $\mf g_{<0}^e$ in $\mf g_{(-a, 0)}$. If $\mf g^{>-a}_{-a/2}\neq 0$, then (\ref{Def:BFong}) restricts to a non-degenerate bilinear form on $\mf g^{>-a}_{-a/2}$. Choose an isotropic subspace $\mf l$ of $\mf g^{>-a}_{-a/2}$ and let $\mf l^\perp$ be its orthogonal complement. Set $$\mf m=\mf g_{\leq -a}\oplus (\mf g^{>-a}\cap \bigoplus_{-a<i<-a/2}\mf g_i)\oplus \mf l, \quad \mf n=\mf g_{\leq -a}\oplus (\mf g^{>-a}\cap \bigoplus_{-a<i<-a/2}\mf g_i)\oplus \mf l^\perp.$$  
Then since $[\mf n, \mf n]\subseteq \mf g_{\leq -a}$, Condition (v) in \Cref{Def:AdPare} is automatically satisfied. All the other conditions can also be easily proved.  \medskip 

Now we can give a characterization of an admissible pair. Let $\Gamma$ be a $\Z$-grading satisfying $(\ast)$ as above. Then the following proposition gives an equivalent definition of an admissible pair with respect to $\Gamma$ and a nilpotent element $e\in \mf g_a$. 

\begin{prop}\label{prop:charofadmpair}
	An admissible pair $(\mf m, \mf n)$ with respect to $e$ is a choice of a graded complement of $\mf g^e_{<0}$ in $\mf g_{(-a, 0)}$, say $\mf g^{>-a}$, and a choice of an isotropic subspace $\mf l$ of $\mf g^{>-a}$,  with $\mf m:=\mf g_{\leq -a}\oplus \mf l, \, \mf n:=\mf g_{\leq -a}\oplus \mf l^\perp$, where $\mf l^\perp$ is the orthogonal complement of $\mf l$ with respect to (\ref{Def:BFong}), such that $\mf n$ is a subalgebra of $\mf g$ and $\mf m$ is an ideal of $\mf n$. 
\end{prop}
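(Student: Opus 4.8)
The plan is to prove the claimed description is equivalent to \Cref{Def:AdPare}, in both directions. The forward implication---that any admissible pair $(\mf m,\mf n)$ has the stated shape---is almost entirely contained in the results preceding this proposition. Indeed, condition (ii) together with the gradedness of $\mf m$ and $\mf n$ forces $\mf m=\mf g_{\leq -a}\oplus\mf m_{>-a}$ and $\mf n=\mf g_{\leq -a}\oplus\mf n_{>-a}$ with $\mf m_{>-a},\mf n_{>-a}\subseteq\mf g_{(-a,0)}$. As in the construction before the statement, I would extend $\mf n_{>-a}$ to a graded complement $\mf g^{>-a}$ of $\mf g^e_{<0}$ in $\mf g_{(-a,0)}$ and set $\mf l:=\mf m_{>-a}$; the corollary just established then says precisely that $\mf l$ is isotropic for \eqref{Def:BFong} and that $\mf n_{>-a}=\mf l^\perp$, so $\mf m=\mf g_{\leq -a}\oplus\mf l$ and $\mf n=\mf g_{\leq -a}\oplus\mf l^\perp$. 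That $\mf n$ is a subalgebra is built into the notion of an admissible pair, and that $\mf m$ is an ideal of $\mf n$ is exactly condition (v).

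For the converse I start from a graded complement $\mf g^{>-a}$, an isotropic $\mf l\subseteq\mf g^{>-a}$, the subalgebra $\mf n=\mf g_{\leq -a}\oplus\mf l^\perp$ and the ideal $\mf m=\mf g_{\leq -a}\oplus\mf l$, and verify (i)--(vi). Conditions (i) and (v), and the requirement that $\mf n$ be a subalgebra, are part of the standing hypotheses; (ii) holds because $\mf l$ isotropic gives $\mf l\subseteq\mf l^\perp$, whence $\mf m\subseteq\mf n$, the remaining inclusions being clear from $\mf g^{>-a}\subseteq\mf g_{(-a,0)}$. For (iv) I would use a degree argument: an element of $\mf n$ is $x+y$ with $x\in\mf g_{\leq -a}$ and $y\in\mf l^\perp\subseteq\mf g_{(-a,0)}$, so $[e,x]$ lies in degrees $\leq 0$ while $[e,y]$ lies in degrees in $(0,a)$; these degree ranges are disjoint, so $[e,x+y]=0$ forces $[e,x]=[e,y]=0$, and injectivity of $\ad e$ on $\mf g_{\leq -a}$ (by $(\ast)$) and on $\mf g^{>-a}$ (\Cref{lem:nondegenerate}) gives $x=y=0$. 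Condition (vi) is a dimension count: non-degeneracy of \eqref{Def:BFong} yields $\dim\mf l+\dim\mf l^\perp=\dim\mf g^{>-a}$, so $\dim\mf m+\dim\mf n=2\dim\mf g_{\leq -a}+\dim\mf g^{>-a}=\dim[\mf g,e]$ by \Cref{Cor:dimequality}.

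The real work is condition (iii), $\mf m^\perp\cap[\mf g,e]=[\mf n,e]$ (here $\mf m^\perp$ is taken with respect to the Killing form). I would first establish the inclusion $[\mf n,e]\subseteq\mf m^\perp\cap[\mf g,e]$. Invariance gives $(m\,|\,[y,e])=(e\,|\,[m,y])$ for $m\in\mf m$, $y\in\mf n$, and I would show the right side vanishes by a case check on degrees: whenever $m$ or $y$ lies in $\mf g_{\leq -a}$ the bracket $[m,y]$ sits in degree strictly below $-a$ and cannot pair with $e\in\mf g_a$, while for $m\in\mf l$, $y\in\mf l^\perp$ one has $(e\,|\,[m,y])=\langle y,m\rangle=0$ by the definition of $\mf l^\perp$. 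For the reverse inclusion I would argue by dimension. Since the Killing form is invariant and non-degenerate, $[\mf g,e]^\perp=\mf g^e$, so $\mf m^\perp\cap[\mf g,e]=(\mf m+\mf g^e)^\perp$ and hence $\dim(\mf m^\perp\cap[\mf g,e])=\dim[\mf g,e]-\dim\mf m+\dim(\mf m\cap\mf g^e)$. The crux---where I expect the main obstacle to lie---is the identity $\mf m\cap\mf g^e=0$, which I would again obtain from a degree split together with injectivity of $\ad e$ on $\mf g_{\leq -a}$ and on $\mf g^{>-a}$. Feeding $\mf m\cap\mf g^e=0$ and \Cref{Cor:dimequality} into the displayed formula gives $\dim(\mf m^\perp\cap[\mf g,e])=\dim\mf n=\dim[\mf n,e]$, the last equality by (iv), so the inclusion already proved is an equality and (iii) follows.
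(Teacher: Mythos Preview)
Your argument is correct. The forward direction is exactly what the paper does: it builds $\mf g^{>-a}$ from $\mf n_{>-a}$ and then invokes the corollary preceding the proposition to identify $\mf m_{>-a}$ with an isotropic $\mf l$ and $\mf n_{>-a}$ with $\mf l^\perp$. For the converse, the paper does not give a proof---the proposition is presented as a summary of the preceding discussion---so you are supplying more than the paper does. Your verification of (iii) via the dimension identity $\dim(\mf m^\perp\cap[\mf g,e])=\dim[\mf g,e]-\dim\mf m+\dim(\mf m\cap\mf g^e)$, combined with $\mf m\cap\mf g^e=0$ and \Cref{Cor:dimequality}, is clean and correct; this is the only condition requiring real work, and the paper leaves it entirely to the reader.
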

\begin{remark}
	Compared to a good $\Z$-grading, it is no longer trivial to show that $\mf n$ is a subalgebra of $\mf g$ and $\mf m$ is an ideal of $\mf n$ for an arbitrary choice of $\mf l$. 
\end{remark}

\subsection{Affine W-algebras associated to admissible pairs} 
Now we replace Condition (iii) in \Cref{Def:AdPare}  by $[\mf n, \mf n]\subseteq \mf m$, i.e., $\mf m$ is an ideal of $\mf n$ containing the derived algebra of $\mf n$. This gives special admissible pairs. Note that all the examples of admissible pairs given in \cite{Sadaka} in fact satisfy this stronger condition, though we expect that this might not be true in general. Under the characterization of \Cref{prop:charofadmpair}, this stronger condition simply means that $[\mf l^\perp, \mf l^\perp]\subseteq \mf m$.

This stronger condition is exactly the condition that we have assumed in the construction of adjusted semi-infinite cohomology when we pass to the affinization. Affine W-algebras associated to admissible pairs can then be defined similarly as in the previous section.  More precisely, let $\tilde{\mf g}$ be the Kac-Moody affinization of $\mf g$, and $V_k(\mf g)$ the vacuum representation. Assume that $(\mf m, \mf n)$ is an admissible pair satisfying $[\mf n, \mf n]\subseteq \mf m$. Consider the nilpotent subalgebra $\hat{\mf n}=\mf n\otimes \C[t, t^{-1}]$ of $\tilde{\mf g}$ and the one-form $\beta_e\in \hat{\mf n}^*$ defined by $\beta_e(u\otimes t^n):=\delta_{n, -1}(e~|~u)$ for $u\in \mf n$. Define $\rho^{\beta_e}$ and $\gamma^{\beta_e}$ as in the previous section. Then $\ker \gamma^{\beta_e}=\hat{\mf m}=\mf m\otimes \C[t, t^{-1}]$. Since $[\hat{\mf n}, \hat{\mf n}]\subseteq \hat{\mf m}$, the adjusted semi-infinite cohomology of $\hat{\mf n}$, with coefficients in $V_k(\mf g)$, with respect to $\beta_e$ can be defined. 
\begin{definition}
	Let  $(\mf m, \mf n)$ be an admissible pair with respect to $e$ satisfying $[\mf n, \mf n]\subseteq \mf m$. The affine W-algebra $ W^k(\mf m, \mf n,  e)$ associated to $(\mf m, \mf n)$ is defined as the adjusted semi-infinite cohomology $H_a^{\infty/2+\bullet}(\hat{\mf n}, \beta_e, V_k(\mf g))$. 
\end{definition}

\begin{remark}
	The subalgebra $\hat{\mf m}$ does not appear in the cohomology.  It  plays the role of the kernel of the 2-cocyle $\gamma^{\beta_e}$ in the construction of adjusted semi-infinite cohomology.
\end{remark}
The vertex algebra structure on the adjusted semi-infinite cohomology comes from the fact that the adjusted Feigin complex is the tensor product of three vertex (super)algebras, and the differential  is the zero mode of some odd element. One can use the same method as in the appendix of \cite{Kac&Alberto2006} to show that finite W-algebras associated to admissible pairs defined in \cite{Sadaka} are Zhu algebras of affine W-algebras associated to the same admissible pairs.

\bibliographystyle{alpha}

\end{document}